\newtheorem{theorem}{Theorem}
\newtheorem{lemma}{Lemma}
\newtheorem{proposition}{Proposition}
\theoremstyle{definition}
\newtheorem{definition}{Definition}
\newtheorem{remark}{Remark}
\newcommand{\eps}{\varepsilon}
\newcommand{\abs}[1]{\left| #1 \right|}
\newcommand{\norm}[1]{\Vert #1 \Vert}
\newcommand{\RN}{\mathbb{R}^n}
\newcommand{\RR}{\mathbb{R}}
\newcommand{\PP}{\mathbb{P}}
\newcommand{\brac}[1]{\left(#1\right)}
\newcommand{\fbrac}[1]{\left\{#1\right\}}
\newcommand{\sbrac}[1]{\left[#1\right]}
\newcommand{\abrac}[1]{\langle#1\rangle}
\DeclareMathOperator{\conv}{conv}
\DeclareMathOperator{\mean}{\mathbb{E}}
\DeclareMathOperator{\supp}{supp}
\DeclareMathOperator{\sgn}{sgn}
\DeclareMathOperator{\cone}{cone}
\DeclareMathOperator{\Id}{\mathbf{Id}}
\begin{document}

\title{Analysis $\ell_1$-recovery with frames and Gaussian measurements}
\author{Holger Rauhut\thanks{RWTH Aachen University, Lehrstuhl C f{\"u}r Mathematik (Analysis), Templergraben 55, 52062 Aachen Germany, \tt{rauhut@mathc.rwth-aachen.de}}, Maryia Kabanava\thanks{RWTH Aachen University, Lehrstuhl C f{\"u}r Mathematik (Analysis), Templergraben 55, 52062 Aachen Germany, \tt{rauhut@mathc.rwth-aachen.de}} }
\date{\today}

\maketitle


\begin{abstract}
This paper provides novel results for the recovery of signals from undersampled measurements based on analysis $\ell_1$-minimization, 
when the analysis operator is given by a frame. We both provide so-called uniform and nonuniform recovery guarantees for cosparse (analysis-sparse) signals using
Gaussian random measurement matrices.
The nonuniform result relies on a recovery condition via tangent cones and the uniform recovery guarantee is based on an analysis version of the null space property. 
Examining these conditions for Gaussian random matrices leads to precise bounds on the number of measurements required for successful recovery. 
In the special case of standard sparsity, our result improves a bound due to Rudelson and Vershynin concerning the exact reconstruction of sparse signals from Gaussian measurements with respect to the constant and extends it to stability under passing to approximately
sparse signals and to robustness under noise on the measurements.
\end{abstract}

\medskip
{\bf Keywords:} compressive sensing, $\ell_1$-minimization, analysis regularization, frames, Gaussian random matrices.

\section{Introduction}

Compressive sensing \cite{do06-2,carota06,FoucartRauhut} is a recent field that has seen enormous research activity in the past years. It predicts that certain signals (vectors) can be recovered from 
what was previously believed to be incomplete information using efficient reconstruction methods. Applications of this principle range from magnetic resonance imaging over radar and remote sensing to astronomical
signal processing and more. The key assumption of the theory is that the signal to be recovered is sparse 
or can at least be well-approximated by a sparse one. Most research activity so far has been dedicated to the synthesis sparsity model where one assumes that the signal can be written as a 
linear combination of only a small number of elements from a basis, or more generally an overcomplete frame. In certain situations, however, it turns out to be more efficient to work with an analysis-based sparsity model.
Here, one rather assumes that the application of a linear map yields a vector with a large number of zero entries. While the synthesis and the analysis model are equivalent in special cases, 
they are very distinct in an overcomplete case. By now, comparably few investigations have been dedicated to the analysis sparsity model and its rigorous understanding is still in its infancy.

The analysis based sparsity model and corresponding reconstruction methods were introduced systematically in recent work of Nam et al.~\cite{NamDaviesEladGribonval}.
Nevertheless we note that it appeared also in earlier works, see e.g.~\cite{CandesEldarNeedellRandall}. In particular, the popular method 
of total variation minimization \cite{ChanShen,NeedellWard} in image processing is closely related to analysis based sparsity with respect to a difference operator. An estimate of the number of Gaussian measurements for successful recovery via total variation minimization has been recently obtained in \cite{KabanavaRauhutZhang}.

In this paper we consider the analysis based sparsity model for the important case that the analysis transform is given by inner products with respect to a possibly redundant frame. 
As reconstruction method we study a corresponding analysis $\ell_1$-minimization approach. Furthermore, we assume that the linear measurements are obtained via an application
of a Gaussian random matrix. The main results of this paper provide precise estimates of the number of measurements required for
the reconstruction of a signal whose analysis representation has a given number of zero elements. Moreover, stability estimates are given. An alternative bound on the number of measurements can be found in \cite{KabanavaRauhutZhang}.

\subsection{Problem statement and main results}

We consider the task of reconstructing a signal $\mathbf{x}\in\RR^d$ from incomplete and possibly corrupted measurements given by
\begin{equation}\label{eqNoisyMeasurements}
\mathbf{y}=\mathbf{M}\mathbf{x}+\mathbf{w},
\end{equation}
where $\mathbf{M}\in\RR^{m\times d}$ with $m\ll d$ is the measurement matrix and 
$\mathbf{w}$ corresponds to noise. Since this system is underdetermined it is impossible to recover $\mathbf{x}$ from $\mathbf{y}$ without additional information, even when $\mathbf{w} = \mathbf{0}$. 

As already mentioned, the underlying assumption in compressive sensing is sparsity. The \emph{synthesis sparsity prior} assumes that $\mathbf{x}$ 
can be represented as a linear combination of a small number of elements of a dictionary $\mathbf{D}\in\RR^{d\times n}$, i.e.,
\[
\mathbf{x}=\mathbf{D\alpha},\;\;\mathbf{\alpha}\in\RN,
\]
where the number of non-zero elements of $\mathbf{\alpha}$, denoted by $\norm{\alpha}_0$, is considerably less than $n$. 
Often $\mathbf{D}$ is chosen as a unitary matrix, which refers to sparsity of $\mathbf{x}$ in an orthonormal basis. 
Unfortunately, the approach to recover $\alpha$, or $\mathbf{x}$ respectively, from $\mathbf{y} = \mathbf{M} \mathbf{x} = \mathbf{M D \alpha}$ (assuming the noiseless case for simplicity) 
via $\ell_0$-minimization, i.e.,
\[
\underset{\alpha\in\RR^n}\min\; \|\mathbf{\alpha}\|_0 \quad \mbox{ subject to } \mathbf{M D \alpha} = \mathbf{y},
\]
is NP-hard in general. A by-now well-studied tractable alternative is the $\ell_1$-minimization approach of finding the minimizer $\alpha^*$ of
\begin{equation}\label{eqBP}
\underset{\mathbf{\alpha}\in\RR^n}\min \; \|\mathbf{\alpha}\|_1 \quad \mbox{ subject to }  \mathbf{M D \alpha} = \mathbf{y}
\end{equation}
The restored signal is then given by $\mathbf{x}^* = \mathbf{D\alpha^*}$. This optimization problem is referred to as basis pursuit \cite{ChenDonohoSaunders}. In the noisy case, one passes to
\begin{equation}\label{eqBPDN}
\underset{\mathbf{\alpha}\in\RR^n}\min \; \|\mathbf{\alpha}\|_1 \quad \mbox{ subject to } \| \mathbf{M D \alpha} - \mathbf{y} \|_2 \leq \eta,
\end{equation}
where $\eta$ corresponds to an estimate of the noise level. 

The \emph{analysis sparsity prior} assumes that $\mathbf{x}$ is sparse in some transform domain, that is, 
given a matrix $\mathbf{\Omega}\in\RR^{p\times d}$ -- the so-called {\em analysis operator} --  the vector $\mathbf{\mathbf{\Omega x}}$ is sparse. 
For instance, such operators can be generated by the discrete Fourier transform, the finite difference operator (related to total variation), 
wavelet \cite{Mallat,RonShen,SelesnickFigueiredo}, curvelet \cite{CandesDonoho} or Gabor transforms \cite{gr01}. 

Analogously to \eqref{eqBP}, a possible strategy for the reconstruction of analysis-sparse vectors (or cosparse vectors, see below) is to solve the analysis $\ell_1$-minimization problem
\begin{equation}\label{eqProblemP1}
\underset{\mathbf{z}\in\RR^d}\min\;\norm{\mathbf{\Omega z}}_1\;\;\mbox{subject to}\;\;\; \mathbf{M}\mathbf{z}=\mathbf{y}, 
\end{equation}
or, in the noisy case,
\begin{equation}\label{eqProblemP1Noise}
\underset{\mathbf{z}\in\RR^d}\min\;\norm{\mathbf{\Omega z}}_1\;\;\mbox{subject to}\;\;\norm{\mathbf{M}\mathbf{z}-\mathbf{y}}_2\leq\eta.
\end{equation}
Both optimization problems can be solved efficiently using convex optimization techniques, see e.g.~\cite{BoydVandenberghe}.
If $\mathbf\Omega$ is an invertible matrix, then these analysis $\ell_1$-minimization problems are equivalent
to \eqref{eqBP} and \eqref{eqBPDN}. However, in general the analysis $\ell_1$-minimization problems cannot be reduced to the standard $\ell_1$-minimization problems.

We note, that one may also pursue greedy or other iterative approaches for recovery, see e.g.~\cite{FoucartRauhut} for an overview in the standard synthesis sparsity case and see e.g.~\cite{NamDaviesEladGribonval} for the analysis sparsity case. However, we will concentrate on the above optimization approaches here.

In the remainder of this paper, we assume that the analysis operator is given by a frame. 
Put formally, let $\{\mathbf{\mathbf{\mathbf{\omega}}}_i\}_{i=1}^p$, $\mathbf{\omega}_i\in\RR^d$, $p\geq d$, be a frame, i.e., there exist positive constants $A$, $B>0$ 
such that for all $\mathbf{x}\in\RR^d$
\[
A\norm{\mathbf{x}}_2^2\leq\sum\limits_{i=1}^p\abs{\abrac{\mathbf{\omega}_i,\mathbf{x}}}^2\leq B\norm{\mathbf{x}}_2^2.
\]
Its elements are collected as rows of the matrix $\mathbf{\Omega}\in\RR^{p\times d}$. The analysis representation of a signal $\mathbf{x}$ is given by 
the vector $\mathbf{\mathbf{\Omega x}}=\fbrac{\abrac{\mathbf{\omega}_i,\mathbf{x}}}_{i=1}^p\in\RR^p$. 
(We note that in the literature it is often common to collect the elements of a frame rather as columns of a matrix. However, for our purposes it is more convenient to collect them as rows.)
The frame is called tight if the frame bounds coincide, i.e., $A = B$.

Cosparsity is now defined as follows.
\begin{definition}\label{defCosparsity}
Given an analysis operator $\mathbf{\Omega}\in\RR^{p\times d}$, the cosparsity of $\mathbf{x}\in\RR^d$ is defined as
\[
l:= p - \norm{\mathbf{\Omega x}}_0.
\]
The index set of the zero entries of $\mathbf{\Omega x}$ is called the cosupport of $\mathbf{x}$. If $\mathbf{x}$ is $l$-cosparse, then $\mathbf{\Omega x}$ is $s$-sparse with $s = p - l$.
\end{definition}
The motivation to work with the cosupport rather than the support in the context of analysis sparsity is that in contrast to synthesis sparsity, 
it is the location of the {\it zero}-elements which define a corresponding subspace. In fact, if $\Lambda$ is the cosupport of $\mathbf{x}$, then it follows from Definition~\ref{defCosparsity} that
\[
\abrac{\mathbf{\omega}_j,\mathbf{x}}=0,\quad \mbox{ for all } j\in\Lambda.
\]
Hence, the set of $l$-cosparse signals can be written as 
\[
\bigcup_{\Lambda \subset [p]: \# \Lambda = l} W_\Lambda,
\]
where $W_\Lambda$ denotes the orthogonal complement of the linear span of $\{\mathbf{\omega}_j : j \in \Lambda\}$. 

In contrast to standard sparsity, there are often certain restrictions on the values that the cosparsity can take. In fact, in the generic case that the frame elements $\mathbf{\omega}_j$ are in general position
in $\mathbb{R}^d$, then every set of $d$ rows of $\mathbf{\Omega}$ are linearly independent. Then the maximal number of zeros that can be achieved for a nontrivial vector $\mathbf{x}$ in the analysis representation $\mathbf{\Omega x}$ is less than $d$, since otherwise $\mathbf{x}=0$. 
Thus, for the cosparsity $l$ of any non-zero vector $\mathbf{x}$ it holds $l< d$ in this case. Nevertheless, if there are linear dependencies among the frame elements 
$\mathbf{\omega}_j$, then larger values of $l$ are possible. This applies to certain redundant frames as well as to the difference operator (related to total variation). 

Our main results concern the minimal number $m$ of measurements that are necessary to recover an $l$-cosparse vector $\mathbf{x}$ from
$\mathbf{y} = \mathbf{M x}$ with $\mathbf{M} \in \mathbb{R}^{m \times d}$. As it is hard to come up with theoretical guarantees for deterministic matrices $\mathbf{M}$, we pass to
random matrices. As common in compressive sensing, we work with Gaussian random matrices, that is, with matrices having independent standard normal distributed entries.
Gaussian random matrices have already proven to provide accurate theoretical guarantees in the context of standard synthesis sparsity,
see e.g.~\cite{ChandrasekaranRechtParriloWillsky,dota09}. Moreover, empirical tests indicate that also other types of random matrices behave very similar to Gaussian random matrices in terms 
of recovery performance \cite{dota09-1}, although a theoretical justification may be much harder than for Gaussian matrices. 

We both provide so-called nonuniform and uniform recovery guarantees. The nonuniform result states that a given fixed cosparse vector $\mathbf{x}$ is recovered via analysis $\ell_1$-minimization
from $\mathbf{y} = \mathbf{M x}$ with high probability using a random choice of a Gaussian measurement matrix $\mathbf{M}$ under a suitable condition on the number of measurements. 
In contrast, the uniform result states that a 
single random draw of a Gaussian matrix $\mathbf{M}$ is able to recover {\em all} cosparse signals $\mathbf{x}$ simultaneously with high probability. Clearly, the uniform statement 
is stronger than the nonuniform one, however, as we will see, the uniform statement requires more measurements.

We start with the nonuniform guarantee for recovery of cosparse signals with respect to frames using Gaussian measurement matrices.
\begin{theorem}\label{thMainResultForFrame}
Let $\mathbf{\Omega} \in \mathbb{R}^{p \times d}$ be a frame with frame bounds $A,B>0$ and
let $\mathbf{x}$ be $l$-cosparse, that is, $\mathbf{\Omega x}$ is $s$-sparse with $s = p-l$. For a Gaussian random matrix $\mathbf{M}\in\RR^{m\times d}$  
and $0<\eps<1$, if
\begin{equation}\label{eqNumberOfMeasurementsForFrame}
\frac{m^2}{m+1}\geq \frac{2Bs}{A}\brac{\sqrt{\ln\left(\frac{ep}{s}\right)}+\sqrt{\frac{A\ln(\eps^{-1})}{Bs}}}^2,
\end{equation}
then with probability at least $1-\eps$, the vector $\mathbf{x}$ is the unique minimizer of $\norm{\mathbf{\Omega z}}_1$ subject to $\mathbf{Mz}=\mathbf{Mx}$.
\end{theorem}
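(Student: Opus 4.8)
The plan is to combine the standard tangent-cone recovery criterion with Gordon's escape-through-the-mesh theorem, so that the whole statement collapses to a single bound on the Gaussian width of the descent cone of $f(\mathbf{z}) := \norm{\mathbf{\Omega z}}_1$ at $\mathbf{x}$. First I would record the geometric recovery condition: $\mathbf{x}$ is the unique minimizer of $f$ subject to $\mathbf{Mz}=\mathbf{Mx}$ if and only if the descent cone $T(\mathbf{x}) = \overline{\fbrac{\mathbf{v} : f(\mathbf{x}+t\mathbf{v})\le f(\mathbf{x}) \text{ for some } t>0}}$ meets the kernel of $\mathbf{M}$ only at the origin, i.e. $T(\mathbf{x})\cap\ker\mathbf{M} = \fbrac{\mathbf{0}}$. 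Writing $\Lambda$ for the cosupport and $S = [p]\setminus\Lambda$, the directional derivative is $f'(\mathbf{x};\mathbf{v}) = \sum_{j\in\Lambda}\abs{\abrac{\mathbf{\omega}_j,\mathbf{v}}} + \sum_{i\in S}\sgn\brac{\abrac{\mathbf{\omega}_i,\mathbf{x}}}\abrac{\mathbf{\omega}_i,\mathbf{v}}$, which shows $T(\mathbf{x}) = \fbrac{\mathbf{v} : \mathbf{\Omega v}\in T_1}$, where $T_1\subset\RR^p$ is the ordinary $\ell_1$-descent cone at the $s$-sparse vector $\mathbf{\Omega x}$.

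Next I would invoke Gordon's theorem for a Gaussian $\mathbf{M}$: with $\lambda_m = \mathbb{E}\norm{\mathbf{g}}_2$ for a standard Gaussian $\mathbf{g}\in\RR^m$ (so $\lambda_m\ge m/\sqrt{m+1}$) one has $\mathbb{E}\inf_{\mathbf{v}\in T(\mathbf{x}),\,\norm{\mathbf{v}}_2=1}\norm{\mathbf{Mv}}_2 \ge \lambda_m - w\brac{T(\mathbf{x})\cap S^{d-1}}$; since $\mathbf{M}\mapsto\inf_{\mathbf{v}}\norm{\mathbf{Mv}}_2$ is $1$-Lipschitz, Gaussian concentration keeps this infimum strictly positive, hence $T(\mathbf{x})\cap\ker\mathbf{M}=\fbrac{\mathbf{0}}$, with probability at least $1-\eps$ whenever $\lambda_m \ge w\brac{T(\mathbf{x})\cap S^{d-1}} + \sqrt{2\ln(\eps^{-1})}$. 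The right-hand side of \eqref{eqNumberOfMeasurementsForFrame} equals $\brac{\sqrt{\tfrac{2Bs}{A}\ln(ep/s)} + \sqrt{2\ln(\eps^{-1})}}^2$, so, using $\lambda_m\ge m/\sqrt{m+1}$, the entire theorem reduces to proving the width estimate $w\brac{T(\mathbf{x})\cap S^{d-1}}^2 \le \tfrac{2Bs}{A}\ln(ep/s)$.

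This width bound is the heart of the argument and the step I expect to be most delicate. Parametrizing $\mathbf{v} = \mathbf{\Omega}^{+}\mathbf{z}$ with $\mathbf{z}=\mathbf{\Omega v}\in T_1$ and using the frame inequalities (so $\norm{\mathbf{v}}_2=1$ forces $\norm{\mathbf{z}}_2\le\sqrt{B}$), I would pass to the coefficient domain and obtain $w\brac{T(\mathbf{x})\cap S^{d-1}} \le \sqrt{B}\,\mathbb{E}\operatorname{dist}\brac{\mathbf{\gamma},T_1^{\circ}}\le\sqrt{B\,\mathbb{E}\operatorname{dist}\brac{\mathbf{\gamma},T_1^{\circ}}^2}$, where $\mathbf{\gamma}=(\mathbf{\Omega}^{+})^T\mathbf{g}$ has coordinates $\gamma_i=\abrac{\widetilde{\mathbf{\omega}}_i,\mathbf{g}}$ with respect to the canonical dual frame $\widetilde{\mathbf{\omega}}_i=(\mathbf{\Omega}^T\mathbf{\Omega})^{-1}\mathbf{\omega}_i$; here I used the polarity identity $\sup_{\mathbf{z}\in T_1,\,\norm{\mathbf{z}}_2\le1}\abrac{\mathbf{\gamma},\mathbf{z}}=\operatorname{dist}\brac{\mathbf{\gamma},T_1^{\circ}}$. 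Since the polar cone $T_1^{\circ}=\cone\,\partial\norm{\cdot}_1(\mathbf{\Omega x})$ is separable across coordinates, the squared distance splits into the soft-thresholding sum $\min_{t\ge0}\big[\sum_{i\in S}(\gamma_i - t\,\sgn(\abrac{\mathbf{\omega}_i,\mathbf{x}}))^2 + \sum_{j\in\Lambda}\max(\abs{\gamma_j}-t,0)^2\big]$. The crucial quantitative input is the variance bound $\mathbb{E}\gamma_i^2=\norm{\widetilde{\mathbf{\omega}}_i}_2^2\le 1/A$: writing the frame-operator identity $\sum_i\brac{(\mathbf{\Omega}^T\mathbf{\Omega})^{-1/2}\mathbf{\omega}_i}\brac{(\mathbf{\Omega}^T\mathbf{\Omega})^{-1/2}\mathbf{\omega}_i}^T=\Id$ forces each rank-one summand to be $\preceq\Id$, hence $\mathbf{\omega}_i^T(\mathbf{\Omega}^T\mathbf{\Omega})^{-1}\mathbf{\omega}_i\le1$, whence $\norm{\widetilde{\mathbf{\omega}}_i}_2^2\le\tfrac1A\,\mathbf{\omega}_i^T(\mathbf{\Omega}^T\mathbf{\Omega})^{-1}\mathbf{\omega}_i\le 1/A$. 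With variances at most $1/A$, choosing the threshold $t=\sqrt{2A^{-1}\ln(p/s)}$ and the standard tail estimate $\mathbb{E}\max(\abs{\xi}-\tau,0)^2\le e^{-\tau^2/2}$ for $\xi\sim\mathcal{N}(0,1)$ gives $\mathbb{E}\operatorname{dist}\brac{\mathbf{\gamma},T_1^{\circ}}^2\le\tfrac{2s}{A}\ln(ep/s)$, and multiplying by $B$ yields the claimed estimate. The main obstacle is exactly the coupling introduced by the frame: the clean dependence $B/A$, rather than the naive $B^2/A^2$ one gets from crude operator-norm bounds, rests entirely on the sharp dual-frame estimate $\norm{\widetilde{\mathbf{\omega}}_i}_2^2\le 1/A$; once that is in place, the remainder mirrors the standard-sparsity computation, and assembling the three steps completes the proof.
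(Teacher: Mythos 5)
Your proof is correct and arrives at exactly the bound \eqref{eqNumberOfMeasurementsForFrame}, but you handle the frame differently at the central technical step, so this is a genuinely different route from the paper's. The shared skeleton is the same: the tangent-cone criterion ($\ker\mathbf{M}\cap T(\mathbf{x})=\fbrac{\mathbf{0}}$ suffices), Gordon's escape through a mesh plus Gaussian concentration, and a polar-cone/soft-thresholding computation producing the factor $2s\ln(ep/s)$. The paper, however, first proves a \emph{modified} Gordon theorem (Theorem~\ref{thModifiedGordonsEscapeThroughTheMesh}), comparing $\abrac{\mathbf{Mx},\mathbf{y}}$ against $\frac{1}{\sqrt{A}}\abrac{\mathbf{g},\mathbf{\Omega x}}+\abrac{\mathbf{h},\mathbf{y}}$ so that the deficit term becomes $\frac{1}{\sqrt{A}}\ell(\mathbf{\Omega}(T))$, and then estimates $\ell(\mathbf{\Omega}(T))\le\sqrt{B}\,\ell\brac{K(\mathbf{\Omega x})\cap\mathbb{B}_2^p}$ with an i.i.d.\ standard Gaussian in $\RR^p$; the paper explicitly remarks that the standard Gordon theorem does not suffice for its purposes. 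You keep the standard Gordon theorem in $\RR^d$ and instead bound $\ell(T(\mathbf{x})\cap\mathbb{S}^{d-1})$ directly, transferring to the analysis domain via $\mathbf{v}=\mathbf{\Omega}^{+}\mathbf{\Omega}\mathbf{v}$: the resulting Gaussian vector $\mathbf{\gamma}=(\mathbf{\Omega}^{+})^{T}\mathbf{g}$ is correlated, but since your estimate only uses marginal expectations (linearity of expectation applied to the coordinate-wise soft-thresholding bound at a deterministic threshold), the correlations are indeed harmless, and the lower frame bound enters through your dual-frame estimate $\norm{\widetilde{\mathbf{\omega}}_i}_2^2\le\frac{1}{A}\mathbf{\omega}_i^{T}(\mathbf{\Omega}^{T}\mathbf{\Omega})^{-1}\mathbf{\omega}_i\le\frac{1}{A}$, which you justify correctly; with variances at most $1/A$ and threshold $t=\sqrt{2A^{-1}\ln(p/s)}$ you recover $\frac{2Bs}{A}\ln(ep/s)$, i.e.\ the identical constant. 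The trade-off: your argument needs no new comparison theorem and stays in the statistical-dimension-style geometry of the true descent cone, at the price of the dual-frame/correlated-Gaussian bookkeeping; the paper's modified Gordon theorem costs a separate comparison-process proof, but it applies to \emph{arbitrary} subsets of the sphere and is reused verbatim for the uniform result (Theorem~\ref{thUniformRecoveryWithFrame}), where the relevant set $W_{\rho,s}$ is a union of cones and your polarity identity $\sup_{\mathbf{z}\in T_1,\,\norm{\mathbf{z}}_2\le1}\abrac{\mathbf{\gamma},\mathbf{z}}=\mathrm{dist}(\mathbf{\gamma},T_1^{\circ})$ has no direct analogue. (A minor remark: your claimed equality $T(\mathbf{x})=\fbrac{\mathbf{v}:\mathbf{\Omega}\mathbf{v}\in T_1}$ is true but more than needed; only the immediate inclusion $\mathbf{\Omega}(T(\mathbf{x}))\subset T_1$ enters the width bound.)
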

Roughly speaking, that is, ignoring terms of lower order, a fixed $l$-cosparse vector is recovered with high probability from 
\[
m > 2(B/A) s \ln(ep/s)
\] 
Gaussian measurements where $s = p-l$. Note that the number of measurements increases with increasing frame ratio $B/A$, and the optimal behavior occurs for tight frames.
For $\mathbf{\Omega} =\Id$, this bound slightly strengthens the main result for sparse recovery in \cite{ChandrasekaranRechtParriloWillsky}. We will also show stability of the reconstruction with respect to noise on the measurements, see Theorem~\ref{thNoisyMeasurements} below.
The proof of the above result (given in Section~\ref{sec:nonuniform}) relies on a characterization of the minimizer via tangent cones 
(Theorems~\ref{thRecoveryViaTangentCones} and \ref{thRecoveryViaTangentConesWithNoise}) which is similar to corresponding conditions stated 
in \cite{ChandrasekaranRechtParriloWillsky,MendelsonPajorTomczakJaegermann}. Moreover, our proof uses an extension of the Gordon's escape through a mesh theorem 
(Theorem~\ref{thModifiedGordonsEscapeThroughTheMesh}).

\medskip

We now pass to the uniform recovery result which additionally takes into account that in practice the signals are often only approximately cosparse.
The quantity
\[
\sigma_{s}(\mathbf{\Omega x})_1:=\inf\fbrac{\norm{\mathbf{\Omega x}-\mathbf{u}}_1: \mathbf{u} \;\mbox{is $s$-sparse}}
\]  
describes the $\ell_1$-best approximation error to $\mathbf{\Omega x}$ by $s$-sparse vectors. 
\begin{theorem}\label{thUniformRecoveryWithFrame}
Let $\mathbf{M}\in\RR^{m\times d}$ be a Gaussian random matrix, $0<\rho<1$ and $0<\eps<1$. If
\begin{equation}\label{eqNumberOfMeasurementsForFrameUniformRecovery}
\frac{m^2}{m+1}\geq \frac{2Bs\brac{1+(1+\rho^{-1})^2}}{A}\brac{\!\!\sqrt{\ln\frac{ep}{s}}+\frac{1}{\sqrt 2}+\sqrt{\frac{A\ln(\eps^{-1})}{Bs\brac{1+(1+\rho^{-1})^2}}}}^{\!\!2},
\end{equation}
then with probability at least $1-\eps$ for every vector $\mathbf{x}\in\RR^d$ a minimizer $\mathbf{\hat x}$ of $\norm{\mathbf{\Omega z}}_1$ subject to $\mathbf{Mz}=\mathbf{Mx}$ approximates $\mathbf{x}$ with $\ell_2$-error
\[
\norm{\mathbf{x}-\mathbf{\hat{x}}}_2\leq\frac{2(1+\rho)^2}{\sqrt{A}(1-\rho)}\frac{\sigma_{s}(\mathbf{\Omega x})_1}{\sqrt{s}}.
\]
\end{theorem}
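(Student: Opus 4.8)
The plan is to run the standard two-stage null space property argument, adapted to the frame $\mathbf{\Omega}$. First I would introduce the \emph{analysis $\ell_2$-robust null space property}: say that $\mathbf{M}$ together with $\mathbf{\Omega}$ satisfies it with constants $0<\rho<1$ and $\tau>0$ if
\[
\norm{(\mathbf{\Omega v})_S}_2\le\frac{\rho}{\sqrt s}\norm{(\mathbf{\Omega v})_{S^c}}_1+\tau\norm{\mathbf{Mv}}_2
\]
holds for every $\mathbf{v}\in\RR^d$ and every $S\subset[p]$ with $\#S=s$. The proof then splits into a \emph{deterministic} implication (this property forces the stated stable $\ell_2$-estimate) and a \emph{probabilistic} verification (a Gaussian $\mathbf{M}$ enjoys it once $m$ obeys \eqref{eqNumberOfMeasurementsForFrameUniformRecovery}).

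For the deterministic step, fix $\mathbf{x}$, let $\mathbf{\hat x}$ be a minimizer, and set $\mathbf{v}=\mathbf{\hat x}-\mathbf{x}$, so that $\mathbf{Mv}=\mathbf{0}$ and the robust term drops out. Minimality gives $\norm{\mathbf{\Omega\hat x}}_1\le\norm{\mathbf{\Omega x}}_1$, which is precisely the hypothesis $\norm{\mathbf{\Omega\hat x}}_1\le\norm{\mathbf{\Omega x}}_1$ needed for the usual $\ell_1$-splitting argument applied to the analysis vectors $\mathbf{\Omega x}$ and $\mathbf{\Omega\hat x}$. Choosing $S$ as the support of the best $s$-term approximation of $\mathbf{\Omega x}$ and combining the triangle-inequality split with the cone inequality from the NSP, I would obtain $\norm{\mathbf{\Omega v}}_2\le\frac{2(1+\rho)^2}{(1-\rho)\sqrt s}\,\sigma_s(\mathbf{\Omega x})_1$, the analysis analogue of the standard $\ell_2$-robust NSP recovery estimate. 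The frame lower bound $A\norm{\mathbf v}_2^2\le\norm{\mathbf{\Omega v}}_2^2$ converts this into $\norm{\mathbf v}_2\le A^{-1/2}\norm{\mathbf{\Omega v}}_2$, producing exactly the claimed constant $\frac{2(1+\rho)^2}{\sqrt A(1-\rho)}$.

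For the probabilistic step I would reduce the NSP to a lower bound on $\mathbf{M}$ over a suitable set. Since the defining inequality is automatic whenever $\norm{(\mathbf{\Omega v})_S}_2\le\frac{\rho}{\sqrt s}\norm{(\mathbf{\Omega v})_{S^c}}_1$, and since $\norm{(\mathbf{\Omega v})_S}_2\le\sqrt B\,\norm{\mathbf v}_2$ by the frame upper bound, it suffices to bound $\inf_{\mathbf v\in T}\norm{\mathbf{Mv}}_2$ from below on
\[
T=\fbrac{\mathbf v\in\RR^d:\ \norm{\mathbf v}_2=1\ \text{and}\ \norm{(\mathbf{\Omega v})_S}_2\ge\tfrac{\rho}{\sqrt s}\norm{(\mathbf{\Omega v})_{S^c}}_1\ \text{for some }S,\ \#S=s}.
\]
Here the extended escape-through-a-mesh theorem (Theorem~\ref{thModifiedGordonsEscapeThroughTheMesh}) enters: it gives $\inf_{\mathbf v\in T}\norm{\mathbf{Mv}}_2\ge\lambda_m-\ell(T)-t$ with probability at least $1-e^{-t^2/2}$, where $\lambda_m=\mean\norm{\mathbf g}_2$ obeys $\lambda_m^2\ge m^2/(m+1)$ (the source of the left-hand side of \eqref{eqNumberOfMeasurementsForFrameUniformRecovery}) and $\ell(T)=\mean\sup_{\mathbf v\in T}\abrac{\mathbf g,\mathbf v}$ is the mean width. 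The remaining work is to establish
\[
\ell(T)\le\sqrt{\frac{2Bs\brac{1+(1+\rho^{-1})^2}}{A}}\brac{\sqrt{\ln\frac{ep}{s}}+\frac{1}{\sqrt2}},
\]
after which taking $t=\sqrt{2\ln(\eps^{-1})}$ (so that $e^{-t^2/2}=\eps$) and demanding $\lambda_m\ge\ell(T)+t$ is, upon squaring, exactly \eqref{eqNumberOfMeasurementsForFrameUniformRecovery}; this makes $\inf_{\mathbf v\in T}\norm{\mathbf{Mv}}_2>0$ and hence yields the NSP with a finite $\tau$.

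The main obstacle is precisely this uniform width estimate. Passing to the analysis coordinates $\mathbf u=\mathbf{\Omega v}$ and using the frame bounds to control the change of variables, one splits $\abrac{\mathbf g,\mathbf v}$ across $S$ and $S^c$, estimating the $S$-part by Cauchy--Schwarz against $\norm{(\mathbf{\Omega v})_S}_2$ and the $S^c$-part by an $\ell_\infty$--$\ell_1$ bound, and then invokes the cone constraint $\norm{(\mathbf{\Omega v})_{S^c}}_1\le\frac{\sqrt s}{\rho}\norm{(\mathbf{\Omega v})_S}_2$. The delicate point is that $\mathbf{\Omega}$ couples the coordinates of $\mathbf{\Omega v}$, so the clean orthogonality available for $\mathbf{\Omega}=\Id$ must be replaced by arguments using only the two-sided frame inequality; organizing the supremum over all $\binom ps$ supports so that the combinatorics is paid for exactly once yields the $\sqrt{\ln(ep/s)}$ factor and the additive $1/\sqrt2$, while the Cauchy--Schwarz combination of the (essentially independent) $S$- and $S^c$-contributions is what turns the two scales into the sum-of-squares factor $1+(1+\rho^{-1})^2$ rather than a squared sum. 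Keeping the frame constants $A,B$ and this factor as sharp as in the statement is the crux of the argument.
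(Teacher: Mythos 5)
Your overall architecture coincides with the paper's: the deterministic step (null space property $\Rightarrow$ the stated $\ell_2$-bound, via the cone inequality, a block decomposition of $\mathbf{\Omega v}$, and the lower frame bound; Theorems~\ref{thConeCostraint} and \ref{thRecoveryWithL2NSP}) and the reduction of the probabilistic step to lower-bounding $\inf\norm{\mathbf{Mv}}_2$ over the set of unit vectors violating the null space inequality (the paper's set $W_{\rho,s}$) are exactly what the paper does. The genuine gap is the crux you yourself flag: the width estimate. The mechanism you describe --- Cauchy--Schwarz on the $S$-part, an $\ell_\infty$--$\ell_1$ bound on the $S^c$-part, then the cone constraint --- cannot produce the claimed bound. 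After inserting $\norm{(\mathbf{\Omega v})_{S^c}}_1\le\frac{\sqrt s}{\rho}\norm{(\mathbf{\Omega v})_S}_2$, \emph{both} resulting terms are multiplied by the same quantity $\norm{(\mathbf{\Omega v})_S}_2\le 1$, so there is no pair of coefficients with squares summing to at most one, and Cauchy--Schwarz cannot merge the two scales into $\sqrt{1+(1+\rho^{-1})^2}$. What this route actually yields is the additive bound $\mean\max_{\#S=s}\norm{\mathbf g_S}_2+\frac{\sqrt s}{\rho}\mean\norm{\mathbf g}_\infty\lesssim\sqrt{2s\ln(ep/s)}+\sqrt s+\rho^{-1}\sqrt{2s\ln(2p)}$, i.e.\ a squared-sum constant and a $\ln p$ (rather than $\ln(ep/s)$) factor on the dominant $\rho^{-1}$ term --- strictly weaker than what \eqref{eqNumberOfMeasurementsForFrameUniformRecovery} requires.

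The missing idea is deterministic, not probabilistic: the inclusion $T_{\rho,s}\cap\mathbb{B}_2^p\subset\sqrt{1+(1+\rho^{-1})^2}\,D$ with $D=\conv\fbrac{\mathbf u\in\mathbb S^{p-1}:\#\supp\mathbf u\le s}$ (the paper's Lemma~\ref{lmInclusionInUniversalSet}, following Rudelson--Vershynin). There the factor $\sqrt{1+(1+\rho^{-1})^2}$ arises by splitting $\mathbf u$ into size-$s$ blocks in decreasing magnitude, bounding all blocks beyond the second by $\rho^{-1}$ times the $\ell_2$-norm of the top block via the cone constraint, and maximizing $(1+\rho^{-1})a+\sqrt{1-a^2}$ over $a\in[0,1]$; Cauchy--Schwarz enters legitimately here because the top two blocks have squared $\ell_2$-norms summing to at most $1$. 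The Gaussian computation is then confined to $\ell(D)=\mean\max_{\#S=s}\norm{\mathbf g_S}_2\le\sqrt{2s\ln(ep/s)}+\sqrt s$ (Lemma~\ref{lmEstimateGaussianWidthOfD}), which is where the combinatorics is paid exactly once and the additive $1/\sqrt2$ originates. Without this (or an equivalent) inclusion, your probabilistic step cannot reach the stated constants. A minor further correction: Theorem~\ref{thModifiedGordonsEscapeThroughTheMesh} gives $E_m-\frac{1}{\sqrt A}\ell\brac{\mathbf\Omega(T)}-t$, not $E_m-\ell(T)-t$; this is harmless for your plan, since one bounds $\ell\brac{\mathbf\Omega(T)}\le\sqrt B\,\ell\brac{T_{\rho,s}\cap\mathbb{B}_2^p}$, which is precisely how the ratio $B/A$ enters the final condition.
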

Roughly speaking, with high probability every $l$-cosparse vector can be recovered via analysis $\ell_1$-minimization using a single random draw of a Gaussian matrix 
if 
\begin{equation}\label{eqApprNumberOfMeasurementsUniformRecovery}
m > 10 (B/A)s \ln(ep/s).
\end{equation}
Moreover, the recovery is stable under passing to approximately cosparse vectors when adding slightly more measurements.
The proof of this theorem relies on an extension of the null space property, which is well known in the synthesis sparsity case \cite{codade09,FoucartRauhut,grni03} and was adapted to the analysis sparsity setting in \cite{AldroubiChenPowell,Foucart}.
In fact, for the standard case $\mathbf{\Omega} = \Id$, we improve a result of Rudelson and Vershynin \cite{RudelsonVershynin} (also relying on the null space property) 
with respect to the constants in \eqref{eqNumberOfMeasurementsForFrameUniformRecovery} and add stability in $\ell_2$. 
We further show that recovery is robust under perturbations of the measurements in Theorem \ref{thRobustUniformRecoveryWithFrame}. We note, that in the standard exact sparse case with no noise, the constant $8$ in (\ref{eqApprNumberOfMeasurementsUniformRecovery}) can be replaced by $2e$, see the contribution by Donoho and Tanner in \cite{dota09}. Their methods, however, are completely different to ours, and it is not clear, whether they can be extended to analysis sparsity.

\subsection{Related work}

Let us discuss briefly related theoretical studies on recovery of analysis sparse vectors and compare them with our main results.
An earlier version of Theorem~\ref{thUniformRecoveryWithFrame} was shown by Cand{\`e}s and Needell in \cite{CandesEldarNeedellRandall}. However, they were only able
to treat the case that the analysis operator is given by a tight frame, that is, when $A=B$. Moreover, their analysis is based on a version of
the restricted isometry property and does not provide explicit constants in the corresponding bound on the required number of measurements. To be fair, we note, however, that their analysis applies to
general subgaussian random matrices. The results of \cite{CandesEldarNeedellRandall} were extended to the case of non-tight frames and Weibull matrices in the work of Foucart in \cite{Foucart}. The analysis in \cite{Foucart} incorporates the robust null space property, the verification of which for the Weibull matrices relies on a variant of the classical restricted isometry property. In our work we prove that Gaussian random matrices satisfy the robust null space property by referring to a modification of the Gordon's escape through a mesh theorem.  

A recent contribution by Needell and Ward \cite{NeedellWard} provides theoretical recovery guarantees for the special case of total variation minimization, which corresponds
to analysis $\ell_1$-minimization with a certain difference operator. Unfortunately, we cannot cover this situation with our main results because the difference operator is not a frame. Nevertheless, it would be interesting to pursue theoretical recovery guarantees for total variation minimization and Gaussian random matrices using the approach of this paper.

Nam et al.'s work \cite{NamDaviesEladGribonval} provides a systematic introduction of the analysis sparsity model and treats also greedy recovery methods, see also \cite{ginaelgrda13}. Further contributions are contained in \cite{LiuMiLi,VaiterPeyreDossalFadili}.

Our nonuniform recovery guarantees rely on a geometric characterization of the successful recovery. We obtain quantitative estimates by bounding a certain Gaussian width which can be thought as an intrinsic complexity measure. The authors of \cite{AmelunxenLotzMcCoyTropp} exploit the geometry of optimality conditions to study phase transition phenomena in random linear inverse problems and random demixing problems. They express their results in terms of the statistical dimension which is essentially equivalent to the Gaussian width, see Section 10. 3 of \cite{AmelunxenLotzMcCoyTropp} for further details. 

Also, we note that the optimization problems (\ref{eqProblemP1}) and (\ref{eqProblemP1Noise})
often appear in image processing \cite{CaiOsherShen,ChanShen}.

\subsection{Notation}

We use the notation $\mathbf{\Omega}_{\Lambda}$ to refer to a submatrix of $\mathbf{\Omega}$ with the rows indexed by $\Lambda$; $\brac{\mathbf{\Omega x}}_S$ stands for the vector whose entries indexed by $S$ coincide with the entries of $\mathbf{\Omega x}$ and the rest are filled by zeros. As we have already mentioned, the $\ell_0$-norm $\|\cdot\|_0$ of a vector corresponds to the number of non-zero elements in it. The unit ball in $\RR^d$ with respect to the $\ell_q$-norm is denoted by $B_q^d$. The operator norm of a matrix $\mathbf{A}$ is defined by $\norm{\mathbf{A}}_{2\to2}:=\underset{\norm{x}_2\leq 1}\sup\norm{\mathbf{Ax}}_2$ and the Frobenius norm is given by 
\[
\norm{\mathbf{A}}_F:=\brac{\sum_{i=1}^m\sum_{j=1}^d \abs{A_{ij}}^2}^{1/2}.
\]
It is well-known that Frobenius norm dominates the operator norm, $\norm{\mathbf{A}}_{2\to2} \leq \norm{\mathbf{A}}_F$.
Finally, $[p]$ is the set of all natural numbers not exceeding $p$, i.e., $[p]=\{1,2,\ldots,p\}$.


\section{Nonuniform recovery}
\label{sec:nonuniform}

In this section we prove Theorem~\ref{thMainResultForFrame} and extend it to robust recovery in Theorem~\ref{thNoisyMeasurements}.
The proof strategy is similar as in \cite{ChandrasekaranRechtParriloWillsky}. We rely on conditions on the measurement matrix $\mathbf{M}$ involving tangent cones, which should be of independent interest. In order to check these conditions for a Gaussian random matrix we
rely on an extension of the Gordon's escape through a mesh theorem. (In contrast to \cite{ChandrasekaranRechtParriloWillsky}, the standard version
of the Gordon's result is not sufficient for our purposes.)


\subsection{Recovery via tangent cones}

Our conditions for successful recovery of cosparse signals are formulated via tangent cones. For fixed $\mathbf{x}\in\RR^d$ we define the convex cone
\[
T(\mathbf{x})=\cone\{\mathbf{z}-\mathbf{x}: \mathbf{z}\in\RR^d,\;\norm{\mathbf{\Omega z}}_1\leq\norm{\mathbf{\Omega x}}_1\},
\]
where the notation ``cone'' stands for the conic hull of the indicated set. The following result is analogous to Proposition~2.1 in \cite{ChandrasekaranRechtParriloWillsky}.
\begin{theorem}\label{thRecoveryViaTangentCones}
Let $\mathbf{M}\in\RR^{m\times d}$. A vector $\mathbf{x}\in\RR^d$ is the unique minimizer of $\norm{\mathbf{\Omega z}}_1$ subject to $\mathbf{M}\mathbf{z}=\mathbf{Mx}$ if and only if $\ker \mathbf{M}\cap T(\mathbf{x})=\{\mathbf{0}\}$.
\end{theorem}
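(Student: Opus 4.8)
The plan is to prove the two implications directly, the whole argument resting on a single structural observation about $T(\mathbf{x})$. The sublevel set
\[
C:=\fbrac{\mathbf{z}-\mathbf{x}:\mathbf{z}\in\RR^d,\ \norm{\mathbf{\Omega z}}_1\leq\norm{\mathbf{\Omega x}}_1}
\]
is convex, being an affine translate of a sublevel set of the seminorm $\mathbf{z}\mapsto\norm{\mathbf{\Omega z}}_1$, and it contains the origin (take $\mathbf{z}=\mathbf{x}$). For such a set the conic hull collapses to $\cone(C)=\bigcup_{\lambda\geq0}\lambda C$: given a finite nonnegative combination $\sum_i\lambda_i\mathbf{c}_i$ with $\mathbf{c}_i\in C$ and total weight $\Lambda=\sum_i\lambda_i>0$, convexity of $C$ places $\sum_i(\lambda_i/\Lambda)\mathbf{c}_i$ in $C$, so the combination lies in $\Lambda C$; the case $\Lambda=0$ gives $\mathbf{0}\in C$. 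Consequently every element of $T(\mathbf{x})$ has the form $\lambda(\mathbf{z}-\mathbf{x})$ with $\lambda\geq0$ and $\norm{\mathbf{\Omega z}}_1\leq\norm{\mathbf{\Omega x}}_1$. I will use this description throughout.

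For the forward implication I argue by contraposition: suppose $\mathbf{x}$ is \emph{not} the unique minimizer. Then there exists a feasible $\mathbf{z}\neq\mathbf{x}$, that is $\mathbf{Mz}=\mathbf{Mx}$, with $\norm{\mathbf{\Omega z}}_1\leq\norm{\mathbf{\Omega x}}_1$ (either $\mathbf{x}$ fails to minimize, giving a strictly smaller value, or it minimizes but some other point attains the same value). Put $\mathbf{v}=\mathbf{z}-\mathbf{x}$. Then $\mathbf{v}\neq\mathbf{0}$, $\mathbf{v}\in C\subseteq T(\mathbf{x})$, and $\mathbf{Mv}=\mathbf{0}$, so $\mathbf{v}$ is a nonzero element of $\ker\mathbf{M}\cap T(\mathbf{x})$; hence the intersection is nontrivial.

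For the converse I take a nonzero $\mathbf{v}\in\ker\mathbf{M}\cap T(\mathbf{x})$. By the description above, $\mathbf{v}=\lambda(\mathbf{z}-\mathbf{x})$ for some scalar $\lambda>0$ (necessarily positive, as $\mathbf{v}\neq\mathbf{0}$) and some $\mathbf{z}$ with $\norm{\mathbf{\Omega z}}_1\leq\norm{\mathbf{\Omega x}}_1$. Since $\ker\mathbf{M}$ is a subspace, $\mathbf{z}-\mathbf{x}=\lambda^{-1}\mathbf{v}\in\ker\mathbf{M}$, whence $\mathbf{Mz}=\mathbf{Mx}$; moreover $\mathbf{z}\neq\mathbf{x}$. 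Thus $\mathbf{z}$ is feasible, distinct from $\mathbf{x}$, and has objective value no larger than that of $\mathbf{x}$, so $\mathbf{x}$ is not the unique minimizer. Combining the two implications yields the claimed equivalence.

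I do not anticipate a genuine obstacle here; the only point requiring care is the structural lemma that $\cone(C)=\bigcup_{\lambda\geq0}\lambda C$, which hinges on convexity of the sublevel set and is precisely what makes both directions symmetric and elementary. It is worth noting that the argument never uses any property of $\mathbf{\Omega}$ beyond convexity of $\mathbf{z}\mapsto\norm{\mathbf{\Omega z}}_1$, so the statement holds verbatim for any convex objective.
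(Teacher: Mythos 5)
Your proof is correct and follows essentially the same route as the paper: the forward direction is the same one-line argument, and your structural lemma $\cone(C)=\bigcup_{\lambda\geq 0}\lambda C$ is precisely the convexity/normalization step (the paper's $t_j'=t_j/\sum_j t_j$ together with the triangle inequality for $\norm{\mathbf{\Omega}\cdot}_1$) that the paper performs inline in its converse argument. The only differences are cosmetic: you argue by contraposition and exhibit a second minimizer, whereas the paper assumes uniqueness and derives the contradiction $\mathbf{v}=\mathbf{0}$.
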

\begin{proof}
First assume that $\ker \mathbf{M}\cap T(\mathbf{x})=\{\mathbf{0}\}$. Let $\mathbf{z}\in\RR^d$ be a vector that satisfies $\mathbf{M}\mathbf{z}=\mathbf{Mx}$ and $\norm{\mathbf{\Omega z}}_1\leq\norm{\mathbf{\Omega x}}_1$.
This means that $\mathbf{z}-\mathbf{x}\in T(\mathbf{x})$ and $\mathbf{z}-\mathbf{x}\in\ker \mathbf{M}$. According to our assumption we conclude that $\mathbf{z}-\mathbf{x}=\mathbf{0}$, so that $\mathbf{x}$ is the unique minimizer.

The other direction is proved by contradiction. Let $\mathbf x$ be the unique minimizer of (\ref{eqProblemP1}). Take any $\mathbf v\in T(\mathbf x)\setminus\{\mathbf 0\}$. Then $\mathbf v$ can be written as
\[
\mathbf v=\sum_{j}t_j(\mathbf{z}_j-\mathbf x),\quad t_j\geq 0,\quad \norm{\mathbf{\Omega}\mathbf z_j}_1\leq\norm{\mathbf{\Omega x}}_1.
\]
Since $\mathbf v\neq \mathbf 0$, it holds $\sum\limits_{j}t_j>0$ and we can define $t_j':=\frac{t_j}{\sum\limits_{j}t_j}$. Suppose $\mathbf v\in\ker\mathbf M$. Then
\[
\mathbf 0=\mathbf M\brac{\frac{\mathbf v}{\sum\limits_{j}t_j}}=\mathbf M\brac{\sum_jt_j'\mathbf z_j}-\mathbf{Mx},
\]
so that $\mathbf M\brac{\sum_jt_j'\mathbf z_j}=\mathbf{Mx}$. Together with the estimate 
\[
\norm{\sum\limits_jt_j'\mathbf{z}_j}_1\leq\sum_jt_j'\norm{\mathbf z_j}\leq\norm{\mathbf x}_1
\]
and uniqueness of the minimizer, this implies $\sum\limits_jt_j'\mathbf{z}_j=\mathbf x$. Hence $\mathbf v=\mathbf 0$, which leads to a contradiction. Thus, $\ker\mathbf M\cap T(\mathbf x)=\{\mathbf 0\}$. 
\end{proof}
When the measurements are noisy, we use the following condition for successful recovery \cite{ChandrasekaranRechtParriloWillsky}.
\begin{theorem}\label{thRecoveryViaTangentConesWithNoise}
Let $\mathbf{x}\in\RR^d$, $\mathbf{M}\in\RR^{m\times d}$ and $\mathbf y = \mathbf{M}\mathbf{x}+\mathbf{w}$ with $\norm{\mathbf{w}}_2\leq\eta$. If
\begin{equation}\label{eqInfOverCone}
\underset{\begin{subarray}{c}
\mathbf{v}\in T(\mathbf{x})\\
\norm{\mathbf{v}}_2=1
 \end{subarray}}\inf\norm{\mathbf{Mv}}_2\geq\tau
\end{equation}
for some $\tau$, then a minimizer $\mathbf{\hat{x}}$ of (\ref{eqProblemP1Noise}) satisfies
\[
\norm{\mathbf{x}-\mathbf{\hat{x}}}_2\leq\frac{2\eta}{\tau}.
\]
\end{theorem}
\begin{proof}
Since $\mathbf{\hat x}$ is a minimizer of (\ref{eqProblemP1Noise}), we have $\norm{\mathbf{\mathbf{\Omega\hat x}}}_1\leq\norm{\mathbf{\Omega x}}_1$ and $\mathbf{\hat x}-\mathbf{x}\in T(\mathbf{x})$. Our assumption (\ref{eqInfOverCone}) implies
\begin{equation}\label{eqMeasurementOfDifferenceInCone}
\norm{\mathbf{M}(\mathbf{\hat{x}}-\mathbf{x})}_2\geq\tau\norm{\mathbf{\hat{x}}-\mathbf{x}}_2.
\end{equation}
On the other hand, we can give an upper bound for $\norm{\mathbf{M\hat{x}}-\mathbf{M}\mathbf{x}}_2$ by
\begin{equation}\label{eqMeasurementOfDIfferenceInitialCondition}
\norm{\mathbf{M\hat x}-\mathbf{M}\mathbf{x}}_2\leq\norm{\mathbf{M\hat x}-\mathbf{y}}+\norm{\mathbf{M}\mathbf{x}-\mathbf{y}}_2\leq 2\eta.
\end{equation}
Combining (\ref{eqMeasurementOfDifferenceInCone}) and (\ref{eqMeasurementOfDIfferenceInitialCondition}) we get the desired estimate. 
\end{proof}


\subsection{Nonuniform recovery with Gaussian measurements}
\label{secNonUniformRecovery}

To prove the non-uniform recovery result for Gaussian random measurements (Theorem \ref{thMainResultForFrame}) we rely on the condition stated in Theorem \ref{thRecoveryViaTangentCones}, which requires that the null space of the measurement matrix $\mathbf{M}$ misses the set $T(\mathbf{x})$. The next ingredient of the proof is a variation 
of the Gordon's escape through a mesh theorem, which was first used in the context of compressed sensing in \cite{RudelsonVershynin}.
To state this theorem, we introduce some notation and formulate auxiliary lemmas.

Let $\mathbf{g}\in\RR^m$ be a standard Gaussian random vector, that is, a vector of independent mean zero, variance one normal distributed random variables. 
Then for 
\[
E_m:=\mean\norm{\mathbf{g}}_2=\sqrt{2}\;\frac{\Gamma\brac{(m+1)/2}}{\Gamma\brac{m/2}}
\]  
we have
\[
\frac{m}{\sqrt{m+1}}\leq E_m\leq\sqrt{m},
\]
see \cite{Gordon,FoucartRauhut}. For a set $T\subset\RR^d$ we define its Gaussian width by 
\[
\ell(T):=\mean\underset{\mathbf{x}\in T}\sup\abrac{\mathbf{x},\mathbf{g}},
\]
where $\mathbf{g}\in\RR^d$ is a standard Gaussian random vector.
\begin{lemma}[Gordon \cite{Gordon}]\label{lmGordon}
Let $X_{i,j}$ and $Y_{i,j}$, $i=1\ldots,m$, $j=1,\ldots,n$ be two mean-zero Gaussian random variables. If
\[
\begin{aligned}
\mean\abs{X_{i,j}-X_{k,l}}^2&\leq\mean\abs{Y_{i,j}-Y_{k,l}}^2,\;\;\mbox{for all}\;\;i\neq k,\;\mbox{and}\;j,l\\
\mean\abs{X_{i,j}-X_{i,l}}^2&\geq\mean\abs{Y_{i,j}-Y_{i,l}}^2\;\;\mbox{for}\;i,j,l,
\end{aligned}
\]
then
\[
\mean\underset{i}\min\,\underset{j}\max\,X_{i,j}\geq\mean\underset{i}\min\,\underset{j}\max\,Y_{i,j}.
\]
\end{lemma}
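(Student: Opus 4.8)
The statement is Gordon's min--max comparison inequality, and the natural route is the Gaussian interpolation (smart-path) method combined with a smooth surrogate for the $\min_i\max_j$ functional. First I would reduce to the case where the two arrays are independent: since the conclusion involves only the marginal laws of $\min_i\max_j X_{i,j}$ and $\min_i\max_j Y_{i,j}$, I may realize $(X_{i,j})$ and $(Y_{i,j})$ on a common probability space as independent centered Gaussian families. Because $\min$ and $\max$ are not differentiable, I replace them by the soft extremum
\[
F_\beta(z)=-\frac1\beta\log\sum_i\brac{\sum_j e^{\beta z_{ij}}}^{-1},\qquad \beta>0,
\]
which satisfies $\max_j z_{ij}\le\frac1\beta\log\sum_j e^{\beta z_{ij}}\le\max_j z_{ij}+\frac{\log n}{\beta}$ and an analogous two-sided bound for the outer soft-min, so that $\norm{F_\beta-\min_i\max_j}_\infty\le(\log m+\log n)/\beta\to0$. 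Crucially, $F_\beta$ is translation-equivariant, $F_\beta(z+c\mathbf 1)=F_\beta(z)+c$, whence $\sum_{a}\partial_a F_\beta\equiv1$ and $\sum_b\partial_a\partial_b F_\beta\equiv0$ for every index $a$, where I abbreviate a pair $(i,j)$ by a single letter $a$.

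Next I interpolate along $Z(t)=\sqrt t\,X+\sqrt{1-t}\,Y$, $t\in[0,1]$, and set $\phi(t)=\mean F_\beta(Z(t))$, so that $\phi(0)=\mean F_\beta(Y)$ and $\phi(1)=\mean F_\beta(X)$. Differentiating and applying Gaussian integration by parts (Stein's identity) to each Gaussian family separately --- using that $Z_b$ has covariance $\sqrt t\,\sigma^X_{ab}$ with $X_a$ and $\sqrt{1-t}\,\sigma^Y_{ab}$ with $Y_a$, where $\sigma^X_{ab}=\mean[X_aX_b]$ --- the $\sqrt t$ factors cancel and I obtain
\[
\phi'(t)=\tfrac12\sum_{a,b}\brac{\sigma^X_{ab}-\sigma^Y_{ab}}\mean\sbrac{\partial_a\partial_b F_\beta(Z(t))}.
\]
Using $\sum_b\partial_a\partial_b F_\beta=0$ together with $\sigma_{ab}=\tfrac12(\sigma_{aa}+\sigma_{bb}-\mean\abs{X_a-X_b}^2)$, the diagonal covariances drop out and the expression converts into a sum over squared increments,
\[
\phi'(t)=-\tfrac14\sum_{a,b}\brac{\mean\abs{X_a-X_b}^2-\mean\abs{Y_a-Y_b}^2}\mean\sbrac{\partial_a\partial_b F_\beta(Z(t))},
\]
which is exactly the form that matches the two hypotheses.

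The heart of the argument --- and the step I expect to be most delicate --- is the sign analysis of the mixed second derivatives of $F_\beta$, which must dovetail precisely with the two covariance hypotheses. Writing $p_{ij}=e^{\beta z_{ij}}/\sum_l e^{\beta z_{il}}$ and $q_i=(\sum_j e^{\beta z_{ij}})^{-1}/\sum_k(\sum_l e^{\beta z_{kl}})^{-1}$, a direct computation gives, for $i\ne k$,
\[
\partial_{z_{ij}}\partial_{z_{kl}}F_\beta=\beta\,q_iq_k\,p_{ij}p_{kl}\ge0,
\]
while for $j\ne l$ within the same row $i$,
\[
\partial_{z_{ij}}\partial_{z_{il}}F_\beta=-\beta\,q_i(2-q_i)\,p_{ij}p_{il}\le0,
\]
since $p,q\in[0,1]$. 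Pairing each term of the sum: for $i\ne k$ the first hypothesis makes $\mean\abs{X_a-X_b}^2-\mean\abs{Y_a-Y_b}^2\le0$ while the second derivative is $\ge0$; for $i=k,\,j\ne l$ the second hypothesis makes the increment difference $\ge0$ while the second derivative is $\le0$; the diagonal $a=b$ vanishes. In every case the product is $\le0$, whence $\phi'(t)\ge0$ on $[0,1]$. Consequently $\mean F_\beta(X)=\phi(1)\ge\phi(0)=\mean F_\beta(Y)$ for every $\beta>0$, and letting $\beta\to\infty$ with the uniform approximation above yields $\mean\min_i\max_j X_{i,j}\ge\mean\min_i\max_j Y_{i,j}$, as claimed. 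The only technical points to verify are the legitimacy of differentiating under the expectation and of Stein's identity, which hold because the first derivatives of $F_\beta$ are bounded by $1$, the second by $O(\beta)$, and Gaussian vectors have all moments.
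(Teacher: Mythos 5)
The paper offers no proof of this lemma: it is stated as a quoted result of Gordon \cite{Gordon} (with \cite{leta91} and \cite[Remark 8.28]{FoucartRauhut} invoked in Remark~\ref{Gordon:inf} for the extension to infinite index sets), so your argument can only be measured against the classical proofs, not against anything in the paper. Your proof is correct, and it is the standard modern interpolation argument. I checked the computations everything hinges on: writing $W_i=\sum_j e^{\beta z_{ij}}$ and $S=\sum_i W_i^{-1}$, one gets $\partial_{z_{ij}}F_\beta=q_ip_{ij}$ with $q_i=W_i^{-1}/S$ and $p_{ij}=e^{\beta z_{ij}}/W_i$, and then indeed $\partial_{z_{ij}}\partial_{z_{kl}}F_\beta=\beta q_iq_kp_{ij}p_{kl}\geq 0$ for $i\neq k$, while $\partial_{z_{ij}}\partial_{z_{il}}F_\beta=-\beta q_i(2-q_i)p_{ij}p_{il}\leq 0$ for $j\neq l$; these are consistent with the identity $\sum_b\partial_a\partial_b F_\beta=0$ (the diagonal entry $\beta q_ip_{ij}\sbrac{1-p_{ij}(2-q_i)}$ exactly balances the two off-diagonal row sums), which is what legitimately kills the diagonal covariances and converts $\phi'(t)=\frac12\sum_{a,b}(\sigma^X_{ab}-\sigma^Y_{ab})\mean\sbrac{\partial_a\partial_b F_\beta(Z(t))}$ into the increment form; the sign pairing with the two hypotheses then gives $\phi'\geq 0$ on $(0,1)$, and continuity of $\phi$ at the endpoints disposes of the $t^{-1/2}$, $(1-t)^{-1/2}$ singularities. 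Two remarks by way of comparison. First, your reduction to independent realizations is not merely permissible but necessary for $Z(t)$ to be defined, and it silently repairs the statement's loose phrasing: ``two mean-zero Gaussian random variables'' must be read as two jointly Gaussian centered families. Second, Gordon's own proof (as presented also in \cite{leta91}) is likewise an interpolation argument, but it smooths products of indicator functions so as to compare $\PP\brac{\bigcap_i\bigcup_j\fbrac{X_{ij}\geq\lambda_{ij}}}$ with the corresponding probability for $Y$ --- a comparison of distribution functions that is stronger than the expectation inequality and yields it by integration; your log-sum-exp surrogate $F_\beta$ targets the expectations directly, which is cleaner and is all this paper ever uses, since the probability bound in Theorem~\ref{thModifiedGordonsEscapeThroughTheMesh} comes from concentration of measure and not from the comparison lemma itself.
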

\begin{remark}\label{Gordon:inf}
Gordon's lemma extends to the case of Gaussian processes indexed by possibly infinite sets where the expected maxima or minima are replaced
by corresponding lattice suprema or infima, see for instance \cite[Remark 8.28]{FoucartRauhut} or \cite{leta91}.
\end{remark}

We further exploit the concentration of measure phenomenon, which asserts that Lipschitz functions concentrate well around their expectation \cite{Ledoux,Massart}.
\begin{lemma}[Concentration of measure]\label{lmConcentrationOfMeasure}
Let $f:\RR^n\to\RR$ be an $L$-Lipschitz function:
\[
\abs{f(\mathbf{x})-f(\mathbf{y})}\leq L\norm{\mathbf{x}-\mathbf{y}}_2,\;\;\mbox{for all}\;\mathbf{x},\mathbf{y}\in\RR^n.
\]
Let $\mathbf{g}=(g_1,g_2,\ldots,g_n)$ be a vector of independent standard normal random variables. Then, for all $t>0$,
\[
\PP\brac{\mean[f(\mathbf{g})]-f(\mathbf{g})\geq t}\leq e^{-\frac{t^2}{2L^2}}.
\]
\end{lemma}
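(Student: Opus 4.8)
The plan is to reduce the tail estimate to a bound on the moment generating function of $f(\mathbf{g})-\mean[f(\mathbf{g})]$ and then invoke the Chernoff (Markov) bound. Concretely, I would establish the sharp sub-Gaussian estimate
\[
\mean\brac{e^{\lambda(f(\mathbf{g})-\mean[f(\mathbf{g})])}}\leq e^{L^2\lambda^2/2}\qquad\text{for all }\lambda>0,
\]
after which $\PP\brac{f(\mathbf{g})-\mean[f(\mathbf{g})]\geq t}\leq e^{-\lambda t+L^2\lambda^2/2}$, and optimizing over $\lambda$ (the choice $\lambda=t/L^2$) yields the exponent $-t^2/(2L^2)$. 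Replacing $f$ by $-f$, which is again $L$-Lipschitz, gives exactly the stated lower-tail bound for $\mean[f(\mathbf{g})]-f(\mathbf{g})$. The sharpness of the constant $1/(2L^2)$ is precisely the reason to run the logarithmic Sobolev argument below rather than the cruder Gaussian interpolation argument, which would only deliver a strictly larger constant in the exponent.

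To prove the moment generating function bound I would use the Herbst argument. First assume $f$ is smooth with $\norm{\nabla f}_2\leq L$ pointwise; the general Lipschitz case follows at the end by approximation. Write $H(\lambda):=\mean[e^{\lambda f(\mathbf{g})}]$ and apply the Gaussian logarithmic Sobolev inequality
\[
\mean[h^2\ln h^2]-\mean[h^2]\ln\mean[h^2]\leq 2\,\mean[\norm{\nabla h}_2^2]
\]
to $h=e^{\lambda f/2}$. Using $\norm{\nabla h}_2^2=\tfrac{\lambda^2}{4}\norm{\nabla f}_2^2\,e^{\lambda f}\leq\tfrac{\lambda^2 L^2}{4}e^{\lambda f}$ together with $\mean[\lambda f e^{\lambda f}]=\lambda H'(\lambda)$, this reduces to the differential inequality
\[
\lambda H'(\lambda)-H(\lambda)\ln H(\lambda)\leq\frac{\lambda^2 L^2}{2}H(\lambda).
\]
Dividing by $\lambda^2 H(\lambda)$ recognizes the left-hand side as the derivative of $\lambda\mapsto\lambda^{-1}\ln H(\lambda)$, so that $\frac{d}{d\lambda}\brac{\lambda^{-1}\ln H(\lambda)}\leq L^2/2$. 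Since $\lambda^{-1}\ln H(\lambda)\to\mean[f(\mathbf{g})]$ as $\lambda\to 0^+$, integrating from $0$ to $\lambda$ gives $\ln H(\lambda)\leq\lambda\,\mean[f(\mathbf{g})]+L^2\lambda^2/2$, which is exactly the claimed estimate.

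Two ingredients I would treat separately. The logarithmic Sobolev inequality I would either cite as classical (Gross's inequality) or reduce to the one-dimensional case and tensorize over the independent coordinates of $\mathbf{g}$. The passage from smooth to general $L$-Lipschitz $f$ I would handle by mollification: convolving $f$ with a Gaussian kernel of small variance produces smooth functions that remain $L$-Lipschitz and converge to $f$ uniformly on compact sets, and since the moment generating function bound has constants independent of the smoothing parameter it is preserved in the limit. I expect the main obstacle to be exactly this smoothness and approximation bookkeeping, together with justifying the boundary behavior $\lambda^{-1}\ln H(\lambda)\to\mean[f(\mathbf{g})]$ and the differentiation under the expectation sign; the algebraic core of the Herbst argument is short once the logarithmic Sobolev inequality is granted.
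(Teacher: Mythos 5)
The paper never proves this lemma: it is quoted as a classical fact with citations to Ledoux and Massart, so there is no internal proof to compare against. Your argument is correct, and it is in essence the standard proof found in those very references: the Herbst argument driven by the Gaussian logarithmic Sobolev inequality. The steps check out. Applying the LSI (with the correct constant $2$ for the standard Gaussian measure) to $h=e^{\lambda f/2}$ gives $\lambda H'(\lambda)-H(\lambda)\ln H(\lambda)\leq\tfrac{\lambda^2L^2}{2}H(\lambda)$; dividing by $\lambda^2H(\lambda)$ identifies the left side as $\tfrac{d}{d\lambda}\brac{\lambda^{-1}\ln H(\lambda)}$, and integrating from $0$ with the boundary value $\lambda^{-1}\ln H(\lambda)\to\mean[f(\mathbf{g})]$ yields the sub-Gaussian moment generating function bound; Chernoff with $\lambda=t/L^2$ gives the exponent $-t^2/(2L^2)$; and replacing $f$ by $-f$ (still $L$-Lipschitz) converts the upper tail into the stated lower-tail inequality. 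The technicalities you flag are genuine but routine: $H(\lambda)<\infty$ and differentiation under the expectation follow from $\abs{f(\mathbf{g})}\leq\abs{f(\mathbf{0})}+L\norm{\mathbf{g}}_2$ and Gaussian integrability, and mollification preserves the Lipschitz constant while converging uniformly on all of $\RR^n$ (the error is at most $L$ times the first moment of the mollifier), so the MGF bound survives the limit by dominated convergence. Your remark on constants is also accurate: the cruder Maurey--Pisier interpolation argument would only give an exponent of the form $-ct^2/L^2$ with $c=2/\pi^2<1/2$, so the log-Sobolev route is indeed what delivers the sharp constant the lemma asserts and that the paper's quantitative bounds, e.g.\ Theorem~\ref{thMainResultForFrame}, rely on.
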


Next, we state our modification of the Gordon's escape through a mesh theorem, see \cite{Gordon} for the original version. Below, 
$\mathbf{\Omega}(T)$ corresponds to the set of elements produced by applying $\mathbf{\Omega}$ to the elements of $T$.

\begin{theorem}\label{thModifiedGordonsEscapeThroughTheMesh}
Let $\mathbf{\Omega}\in\RR^{p\times d}$ be a frame with constants $A$, $B>0$. Let $\mathbf{M}\in\RR^{m\times d}$ be a Gaussian random matrix and $T$ be a subset of the unit sphere $\mathbb{S}^{d-1}=\{\mathbf{x}\in\RR^d: \norm{\mathbf{x}}_2=1\}$. Then, for $t>0$, it holds
\begin{equation}\label{eqGordonsEscapeThroughTheMesh}
\PP\brac{\underset{\mathbf{x}\in T}\inf\norm{\mathbf{M}\mathbf{x}}_2> E_m-\frac{1}{\sqrt{A}}\ell\brac{\mathbf{\Omega}(T)}-t}\geq 1-e^{-\frac{t^2}{2}}.
\end{equation}
\end{theorem}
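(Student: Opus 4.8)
The plan is to mimic the classical proof of Gordon's escape through the mesh, but to replace the usual auxiliary comparison process by one built from the frame $\mathbf{\Omega}$, so that the lower frame bound $A$ enters precisely through Gordon's comparison inequality. First I would rewrite the quantity of interest as a min--max of a Gaussian process. Using $\norm{\mathbf{M}\mathbf{x}}_2 = \sup_{\mathbf{u}\in\mathbb{S}^{m-1}}\abrac{\mathbf{u},\mathbf{M}\mathbf{x}}$, I set $X_{\mathbf{x},\mathbf{u}} := \abrac{\mathbf{u},\mathbf{M}\mathbf{x}} = \sum_{i,j}M_{ij}u_ix_j$, so that $\inf_{\mathbf{x}\in T}\norm{\mathbf{M}\mathbf{x}}_2 = \inf_{\mathbf{x}\in T}\sup_{\mathbf{u}\in\mathbb{S}^{m-1}}X_{\mathbf{x},\mathbf{u}}$. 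Since the entries $M_{ij}$ are independent standard Gaussians, the increments are
\[
\mean\abs{X_{\mathbf{x},\mathbf{u}}-X_{\mathbf{x}',\mathbf{u}'}}^2 = \norm{\mathbf{u}\mathbf{x}^{T}-\mathbf{u}'(\mathbf{x}')^{T}}_F^2 = 2-2\abrac{\mathbf{u},\mathbf{u}'}\abrac{\mathbf{x},\mathbf{x}'},
\]
the last equality holding on the product of spheres $T\times\mathbb{S}^{m-1}$.

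The key step is to choose the comparison process
\[
Y_{\mathbf{x},\mathbf{u}} := \frac{1}{\sqrt{A}}\abrac{\mathbf{g},\mathbf{\Omega}\mathbf{x}} + \abrac{\mathbf{h},\mathbf{u}},
\]
with $\mathbf{g}\in\RR^p$ and $\mathbf{h}\in\RR^m$ independent standard Gaussian vectors, whose increments equal $\frac{1}{A}\norm{\mathbf{\Omega}(\mathbf{x}-\mathbf{x}')}_2^2 + \norm{\mathbf{u}-\mathbf{u}'}_2^2$. When $\mathbf{x}=\mathbf{x}'$ both processes share the identical increment $\norm{\mathbf{u}-\mathbf{u}'}_2^2$, so the second hypothesis of Lemma~\ref{lmGordon} holds with equality. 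For $\mathbf{x}\neq\mathbf{x}'$ the first hypothesis reduces, after cancellation, to $2\abrac{\mathbf{u},\mathbf{u}'}(1-\abrac{\mathbf{x},\mathbf{x}'}) \leq \frac{1}{A}\norm{\mathbf{\Omega}(\mathbf{x}-\mathbf{x}')}_2^2$; here the lower frame bound gives $\frac{1}{A}\norm{\mathbf{\Omega}(\mathbf{x}-\mathbf{x}')}_2^2 \geq \norm{\mathbf{x}-\mathbf{x}'}_2^2 = 2(1-\abrac{\mathbf{x},\mathbf{x}'})$, and since $\abrac{\mathbf{u},\mathbf{u}'}\leq 1$ by Cauchy--Schwarz the inequality follows. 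This is exactly where the frame structure is used, and calibrating the scalar $1/\sqrt{A}$ so that this comparison holds is the heart of the argument.

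Gordon's lemma in the form valid for continuous index sets (Remark~\ref{Gordon:inf}) then yields $\mean\inf_{\mathbf{x}\in T}\sup_{\mathbf{u}}X_{\mathbf{x},\mathbf{u}} \geq \mean\inf_{\mathbf{x}\in T}\sup_{\mathbf{u}}Y_{\mathbf{x},\mathbf{u}}$. Evaluating the right-hand side by taking the supremum over $\mathbf{u}$ first, $\sup_{\mathbf{u}}\abrac{\mathbf{h},\mathbf{u}} = \norm{\mathbf{h}}_2$, so it equals
\[
\mean\norm{\mathbf{h}}_2 + \frac{1}{\sqrt{A}}\mean\inf_{\mathbf{x}\in T}\abrac{\mathbf{g},\mathbf{\Omega}\mathbf{x}} = E_m - \frac{1}{\sqrt{A}}\ell\brac{\mathbf{\Omega}(T)},
\]
where the symmetry $-\mathbf{g}\stackrel{d}{=}\mathbf{g}$ turns the infimum into $-\ell(\mathbf{\Omega}(T))$. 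This gives the claimed bound in expectation.

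Finally I would pass from the expectation bound to high probability via Lemma~\ref{lmConcentrationOfMeasure}. The map $\mathbf{M}\mapsto\inf_{\mathbf{x}\in T}\norm{\mathbf{M}\mathbf{x}}_2$ is $1$-Lipschitz in the Frobenius (Euclidean) norm on $\RR^{m\times d}$: for any fixed $\mathbf{x}\in\mathbb{S}^{d-1}$ one has $\abs{\norm{\mathbf{M}\mathbf{x}}_2 - \norm{\mathbf{M}'\mathbf{x}}_2} \leq \norm{(\mathbf{M}-\mathbf{M}')\mathbf{x}}_2 \leq \norm{\mathbf{M}-\mathbf{M}'}_F$, and taking infima preserves the bound. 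Applying concentration with $L=1$ gives $\PP\brac{\inf_{\mathbf{x}\in T}\norm{\mathbf{M}\mathbf{x}}_2 > \mean\big[\inf_{\mathbf{x}\in T}\norm{\mathbf{M}\mathbf{x}}_2\big] - t} \geq 1-e^{-t^2/2}$, and inserting the expectation estimate finishes the proof. The main obstacle is the second paragraph: finding the comparison process for which the first Gordon hypothesis holds by the frame bound while the supremum over $\mathbf{u}$ still collapses cleanly to $\norm{\mathbf{h}}_2$ so that $E_m$ reappears; everything else is routine.
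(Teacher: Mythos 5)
Your proposal is correct and follows essentially the same route as the paper's proof: the same min--max reformulation $\norm{\mathbf{Mx}}_2=\sup_{\mathbf{u}\in\mathbb{S}^{m-1}}\abrac{\mathbf{u},\mathbf{Mx}}$, the same comparison process $Y_{\mathbf{x},\mathbf{u}}=\frac{1}{\sqrt{A}}\abrac{\mathbf{g},\mathbf{\Omega x}}+\abrac{\mathbf{h},\mathbf{u}}$ with the lower frame bound verifying Gordon's increment conditions, and the same $1$-Lipschitz-plus-Gaussian-concentration step to pass from the expectation bound to the probability bound. The only differences are notational (your reduction $2\abrac{\mathbf{u},\mathbf{u}'}(1-\abrac{\mathbf{x},\mathbf{x}'})\leq\frac{1}{A}\norm{\mathbf{\Omega}(\mathbf{x}-\mathbf{x}')}_2^2$ is the same algebra as the paper's factored form $2(1-\abrac{\mathbf{x},\mathbf{x}'})(1-\abrac{\mathbf{y},\mathbf{y}'})\geq 0$), so there is nothing to fix.
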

\begin{proof}
Recall that 
\[
\norm{\mathbf{M}\mathbf{x}}_2=\underset{\mathbf{y}\in S^{m-1}}\max\abrac{\mathbf{M}\mathbf{x},\mathbf{y}}.
\]
For $\mathbf{x}\in T$ and $\mathbf{y}\in S^{m-1}$ we compare the two Gaussian processes 
\[
X_{\mathbf{x},\mathbf{y}}:=\abrac{\mathbf{M}\mathbf{x},\mathbf{y}}\qquad \mbox{and}\qquad Y_{\mathbf{x},\mathbf{y}}:=\frac{1}{\sqrt{A}}\abrac{\mathbf{g},\mathbf{\Omega x}}+\abrac{\mathbf{h},\mathbf{y}},
\]
where $\mathbf{g}\in\RR^{p}$ and $\mathbf{h}\in\RR^m$ are independent standard Gaussian random vectors. Let $\mathbf{x},\mathbf{x'}\in \mathbb{S}^{d-1}$ and $\mathbf{y},\mathbf{y'}\in S^{m-1}$. Since $M_{ij}$ are independent with $\mean M_{ij} = 0$, $\mean M_{ij}^2=1$, we have  
\begin{align}
&\mean\abs{X_{\mathbf{x},\mathbf{y}}-X_{\mathbf{x'},\mathbf{y'}}}^2=\mean\abs{\sum_{i=1}^m\sum_{j=1}^dM_{ij}(x_jy_i-x'_jy'_i)}^2=\sum_{i=1}^m\sum_{j=1}^d(x_jy_i-x'_jy'_i)^2\notag\\
&=\sum_{i=1}^m\sum_{j=1}^d(x_j^2y_i^2+x'^2_jy'^2_i-2x_jx'_jy_iy'_i)=\norm{\mathbf{x}}_2^2\norm{\mathbf{y}}_2^2+\norm{\mathbf{x'}}_2^2\norm{\mathbf{y'}}_2^2-2\abrac{\mathbf{x},\mathbf{x'}}\abrac{\mathbf{y},\mathbf{y'}}\notag\\
&=2-2\abrac{\mathbf{x},\mathbf{x'}}\abrac{\mathbf{y},\mathbf{y'}}\label{eqEstimateForXprocess}.
\end{align}
Independence and the isotropicity of the Gaussian vectors $\mathbf{g}$ and $\mathbf{h}$ together with the fact that $\mathbf{\Omega}$ is a frame with lower frame bound $A$ imply
\begin{align}
\mean\abs{Y_{\mathbf{x},\mathbf{y}}-Y_{\mathbf{x'},\mathbf{y'}}}^2  &=\mean\abs{\frac{1}{\sqrt{A}}\abrac{\mathbf{g},\mathbf{\Omega x} -\mathbf{\Omega x}'}}^2+\mean\abs{\abrac{\mathbf{h},\mathbf{y}-\mathbf{y}'}}^2\notag\\
&=\frac{1}{A}\norm{\mathbf{\Omega x} -\mathbf{\Omega x'}}_2^2+\norm{\mathbf{y}-\mathbf{y'}}_2^2\geq \norm{\mathbf{x}-\mathbf{x'}}_2^2+\norm{\mathbf{y}-\mathbf{y}'}_2^2\notag\\
&=\norm{\mathbf{x}}_2^2+\norm{\mathbf{x'}}_2^2-2\abrac{\mathbf{x},\mathbf{x'}}+\norm{\mathbf{y}}_2^2+\norm{\mathbf{y'}}_2^2-2\abrac{\mathbf{y},\mathbf{y'}}\notag\\
\label{eqEstimateForYProcess}
&=4-2\abrac{\mathbf{x},\mathbf{x'}}-2\abrac{\mathbf{y},\mathbf{y'}}.
\end{align}
When $\mathbf{x}=\mathbf{x'}$, we have
\[
\mean\abs{Y_{\mathbf{x},\mathbf{y}}-Y_{\mathbf{x},\mathbf{y'}}}^2=\norm{\mathbf{y}-\mathbf{y}'}_2^2=2-2\abrac{\mathbf{y},\mathbf{y'}}.
\]
Combining (\ref{eqEstimateForXprocess}) and (\ref{eqEstimateForYProcess}), we obtain
\[
\mean\abs{Y_{\mathbf{x},\mathbf{y}}-Y_{\mathbf{x'},\mathbf{y'}}}^2-\mean\abs{X_{\mathbf{x},\mathbf{y}}-X_{\mathbf{x'},\mathbf{y'}}}^2\geq 2(1-\abrac{\mathbf{x},\mathbf{x'}})(1-\abrac{\mathbf{y},\mathbf{y'}})
\]
and since
$\abrac{\mathbf{x},\mathbf{x'}}\leq\norm{\mathbf{x}}_2\norm{\mathbf{x'}}_2\leq 1$ and similarly for $\mathbf{y},\mathbf{y}'$, it follows that
\[
\mean\abs{Y_{\mathbf{x},\mathbf{y}}-Y_{\mathbf{x'},\mathbf{y'}}}^2-\mean\abs{X_{\mathbf{x},\mathbf{y}}-X_{\mathbf{x'},\mathbf{y'}}}^2\geq 0.
\]
Moreover, we have
\[
\mean\abs{Y_{\mathbf{x},\mathbf{y}}-Y_{\mathbf{x},\mathbf{y'}}}^2=\mean\abs{X_{\mathbf{x},\mathbf{y}}-X_{\mathbf{x},\mathbf{y'}}}^2.
\]
Due to Gordon's lemma (Lemma \ref{lmGordon}) and Remark \ref{Gordon:inf}, 
\begin{align}
\mean&\underset{\mathbf{x}\in T}\inf\norm{\mathbf{M}\mathbf{x}}_2=\mean\underset{\mathbf{x}\in T}\inf\underset{\mathbf{y}\in S^{m-1}}\max X_{\mathbf{x},\mathbf{y}}\geq\mean\underset{\mathbf{x}\in T}\inf\underset{\mathbf{y}\in S^{m-1}}\max Y_{\mathbf{x},\mathbf{y}}\notag\\
&=\mean\underset{\mathbf{x}\in T}\inf\underset{\mathbf{y}\in S^{m-1}}\max\fbrac{\frac{1}{\sqrt{A}}\abrac{\mathbf{g},\mathbf{\Omega x}}+\abrac{\mathbf{h},\mathbf{y}}}=\mean\underset{\mathbf{x}\in T}\inf\fbrac{\frac{1}{\sqrt{A}}\abrac{\mathbf{g},\mathbf{\Omega x}}+\norm{\mathbf{h}}_2}\notag\\
&=\mean\norm{\mathbf{h}}_2-\frac{1}{\sqrt{A}}\mean\underset{\mathbf{x}\in T}\sup\abrac{\mathbf{g},\mathbf{\Omega x}}=E_m-\frac{1}{\sqrt{A}}\mean\underset{\mathbf{z}\in\mathbf{\Omega}(T)}\sup\abrac{\mathbf{g},\mathbf{z}}=E_m-\frac{1}{\sqrt{A}}\ell(\mathbf{\Omega}(T))\label{eqExpectationOfLipschitzFunctionEstimate}.
\end{align}
Let $F(\mathbf{M}):=\underset{\mathbf{x}\in T}\inf\norm{\mathbf{M}\mathbf{x}}_2$. For any $\mathbf{A},\mathbf{B}\in\RR^{m\times d}$
\begin{equation*}
\underset{\mathbf{x}\in T}\inf\norm{\mathbf{A}\mathbf{x}}_2\leq\underset{\mathbf{x}\in T}\inf\brac{\norm{\mathbf{B}\mathbf{x}}_2+\norm{\brac{\mathbf{A}-\mathbf{B}}\mathbf{x}}_2}
\]
\[
\leq\underset{\mathbf{x}\in T}\inf\norm{\mathbf{B}\mathbf{x}}_2+\norm{\mathbf{A}-\mathbf{B}}_{2\to2} \leq \underset{\mathbf{x}\in T}\inf\norm{\mathbf{B}\mathbf{x}}_2+\norm{\mathbf{A}-\mathbf{B}}_{F}.
\end{equation*}
The second inequality follows from the fact that $T\subset \mathbb{S}^{d-1}$. By interchanging $\mathbf{A}$ and $\mathbf{B}$ we conclude that 
\[
\abs{F(\mathbf{A})-F(\mathbf{B})}\leq\norm{\mathbf{A}-\mathbf{B}}_{F}.
\]
This means that $F$ is $1$-Lipschitz with respect to the Frobenius norm (which corresponds to the $\ell_2$-norm when interpreting a matrix as a vector) 
and due to concentration of measure (Lemma~\ref{lmConcentrationOfMeasure})
\[
\PP\brac{\underset{\mathbf{x}\in T}\inf\norm{\mathbf{M}\mathbf{x}}_2\leq\mean\underset{\mathbf{x}\in T}\inf\norm{\mathbf{M}\mathbf{x}}_2-t}\leq e^{-t^2/2}.
\]
Applying the estimate (\ref{eqExpectationOfLipschitzFunctionEstimate}) to the previous inequality gives
\[
\PP\!\brac{\underset{\mathbf{x}\in T}\inf\norm{\mathbf{M}\mathbf{x}}_2\leq E_m\!-\!\frac{\ell(\mathbf{\Omega}(T))}{\sqrt{A}}-t}\leq\PP\!\brac{\underset{\mathbf{x}\in T}\inf\norm{\mathbf{M}\mathbf{x}}_2\leq\mean\!\underset{\mathbf{x}\in T}\inf\norm{\mathbf{M}\mathbf{x}}_2-t}\leq e^{-\frac{t^2}{2}},
\]
which concludes the proof. 
\end{proof}
The previous result suggests to estimate the Gaussian width of $\mathbf{\Omega}(T)$ with $T := T(\mathbf{x})\cap \mathbb{S}^{d-1}$.
Since $\mathbf{\Omega}$ is a frame with upper frame constant $B$, we have
\[
\mathbf{\Omega}(T)\subset\mathbf{\Omega}(T(\mathbf{x}))\cap\mathbf{\Omega}(\mathbb{S}^{d-1})\subset K(\mathbf{\Omega x})\cap\brac{\sqrt{B}\mathbb{B}_2^p},
\]
where
\[
K(\mathbf{\Omega x})=\cone\fbrac{\mathbf{y}-\mathbf{\Omega x}:\mathbf{y}\in\RR^p,\; \norm{\mathbf{y}}_1\leq\norm{\mathbf{\Omega x}}_1}.
\]
The supremum over a larger set can only increase, hence
\begin{equation}\label{eqGWidthByKoneAndSphere}
\ell(\mathbf{\Omega}(T))\leq\sqrt B\ell\brac{K(\mathbf{\Omega x})\cap \mathbb{B}_2^{p}}.
\end{equation}
We next recall an upper bound for the Gaussian width $\ell\brac{K(\mathbf{\Omega x})\cap \mathbb{B}_2^p}$ from  \cite{ChandrasekaranRechtParriloWillsky} 
involving the polar cone $\mathcal{N}(\mathbf{\Omega x})=K(\mathbf{\Omega x})^{\circ}$ defined by
\[
\mathcal{N}(\mathbf{\Omega x})=\left\{\mathbf{z}\in\RR^{p}:\abrac{\mathbf{z},\mathbf{y}-\mathbf{\Omega x}}\leq 0\;\mbox{for all}\;\mathbf{y}\in\RR^p\;\;\mbox{such that}\;\norm{\mathbf{y}}_1\leq\norm{\mathbf{\Omega x}}_1\right\}.
\]
\begin{proposition}\label{prGaussianWidthsByPolarCone}
Let $\mathbf{g}\in\RR^{p}$ be a standard Gaussian random vector. Then
\[
\ell\brac{K(\mathbf{\Omega x})\cap \mathbb{B}_2^p}\leq\mean\underset{\mathbf{z}\in\mathcal{N}(\mathbf{\Omega x})}\min\norm{\mathbf{g}-\mathbf{z}}_2.
\]
\end{proposition}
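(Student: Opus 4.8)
The plan is to exploit the conic Moreau decomposition, which writes every vector as an orthogonal sum of its projection onto a closed convex cone and its projection onto the polar cone. Observe first that $\min_{\mathbf{z}\in\mathcal{N}(\mathbf{\Omega x})}\norm{\mathbf{g}-\mathbf{z}}_2$ is exactly the distance $\operatorname{dist}(\mathbf{g},\mathcal{N}(\mathbf{\Omega x}))$, which is attained since $\mathcal{N}(\mathbf{\Omega x})$ is a closed convex cone. The heart of the argument is the pointwise claim that, for every realization of $\mathbf{g}$, this distance already dominates the quantity $\sup_{\mathbf{z}\in K(\mathbf{\Omega x})\cap\mathbb{B}_2^p}\abrac{\mathbf{z},\mathbf{g}}$ defining the Gaussian width. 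Taking expectations over $\mathbf{g}$ then gives the proposition.

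I would first reduce to a closed cone. Writing $K:=\overline{K(\mathbf{\Omega x})}$, the linear functional $\abrac{\cdot,\mathbf{g}}$ is continuous, so $\sup_{\mathbf{z}\in K(\mathbf{\Omega x})\cap\mathbb{B}_2^p}\abrac{\mathbf{z},\mathbf{g}}=\sup_{\mathbf{z}\in K\cap\mathbb{B}_2^p}\abrac{\mathbf{z},\mathbf{g}}$ for every $\mathbf{g}$, and passing to the closure leaves the polar cone unchanged, so that $\mathcal{N}(\mathbf{\Omega x})=K^{\circ}$ is the polar of a \emph{closed} convex cone. Next, fixing $\mathbf{g}\in\RR^p$, I would invoke the Moreau decomposition to write $\mathbf{g}=\mathbf{u}+\mathbf{v}$ with $\mathbf{u}=\Pi_{K}(\mathbf{g})\in K$, $\mathbf{v}=\Pi_{\mathcal{N}(\mathbf{\Omega x})}(\mathbf{g})\in\mathcal{N}(\mathbf{\Omega x})$ and $\abrac{\mathbf{u},\mathbf{v}}=0$.

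For any $\mathbf{z}\in K$ with $\norm{\mathbf{z}}_2\leq1$ I would then split
\[
\abrac{\mathbf{z},\mathbf{g}}=\abrac{\mathbf{z},\mathbf{u}}+\abrac{\mathbf{z},\mathbf{v}}.
\]
Since $\mathbf{z}\in K$ and $\mathbf{v}\in\mathcal{N}(\mathbf{\Omega x})=K^{\circ}$, the definition of the polar cone forces $\abrac{\mathbf{z},\mathbf{v}}\leq0$, while Cauchy--Schwarz together with $\norm{\mathbf{z}}_2\leq1$ gives $\abrac{\mathbf{z},\mathbf{u}}\leq\norm{\mathbf{u}}_2$; hence $\abrac{\mathbf{z},\mathbf{g}}\leq\norm{\mathbf{u}}_2$. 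The orthogonality $\abrac{\mathbf{u},\mathbf{v}}=0$ identifies $\mathbf{v}$ as the point of $\mathcal{N}(\mathbf{\Omega x})$ nearest to $\mathbf{g}$, so $\norm{\mathbf{u}}_2=\norm{\mathbf{g}-\mathbf{v}}_2=\min_{\mathbf{z}\in\mathcal{N}(\mathbf{\Omega x})}\norm{\mathbf{g}-\mathbf{z}}_2$. Taking the supremum over all admissible $\mathbf{z}$ and then the expectation over $\mathbf{g}$ produces the asserted bound.

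The main obstacle is precisely this pointwise estimate $\sup_{\mathbf{z}\in K\cap\mathbb{B}_2^p}\abrac{\mathbf{z},\mathbf{g}}\leq\operatorname{dist}(\mathbf{g},\mathcal{N}(\mathbf{\Omega x}))$; once the Moreau decomposition is in hand, the sign of $\abrac{\mathbf{z},\mathbf{v}}$ and the orthogonality relation do all the work, and the surrounding steps (the reduction to the closed cone and the final expectation) are routine. One minor technical point worth recording is the degenerate case $\mathbf{u}=\mathbf{0}$, where $\abrac{\mathbf{z},\mathbf{g}}=\abrac{\mathbf{z},\mathbf{v}}\leq0=\norm{\mathbf{u}}_2$ still holds directly, so no separate treatment is needed.
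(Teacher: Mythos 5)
Your proof is correct: the Moreau decomposition of $\mathbf{g}$ into its projections onto $\overline{K(\mathbf{\Omega x})}$ and the polar cone $\mathcal{N}(\mathbf{\Omega x})$, combined with $\abrac{\mathbf{z},\mathbf{v}}\leq 0$ and Cauchy--Schwarz, is exactly the convex-duality argument that the paper invokes by citation (Chandrasekaran et al.\ and \cite[Ch.~9]{FoucartRauhut}) rather than spelling out. So you have supplied, correctly, the same proof the paper points to, including the routine closure and attainment details.
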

The proof relies on tools from convex analysis, see \cite{ChandrasekaranRechtParriloWillsky,FoucartRauhut}, \cite[Ch.\ 5.9]{BoydVandenberghe}.  
\begin{proposition}
Let $s$ be the sparsity of the vector $\mathbf{\Omega x}\in\RR^p$. Then
\begin{equation}\label{eqGaussianWidthOfCone}
\ell\brac{K(\mathbf{\Omega x})\cap \mathbb{B}_2^p}^2\leq 2s\ln\frac{ep}{s}.
\end{equation}
\end{proposition}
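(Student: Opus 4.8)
The plan is to apply Proposition~\ref{prGaussianWidthsByPolarCone} together with Jensen's inequality, so that it suffices to bound the expected \emph{squared} distance from a standard Gaussian vector $\mathbf{g}\in\RR^p$ to the polar cone $\mathcal{N}(\mathbf{\Omega x})$. Indeed, since $\mean X\leq\brac{\mean X^2}^{1/2}$, Proposition~\ref{prGaussianWidthsByPolarCone} yields
\[
\ell\brac{K(\mathbf{\Omega x})\cap\mathbb{B}_2^p}^2\leq\mean\underset{\mathbf{z}\in\mathcal{N}(\mathbf{\Omega x})}\min\norm{\mathbf{g}-\mathbf{z}}_2^2,
\]
and the whole task reduces to estimating the right-hand side. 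I would first make the polar cone explicit: writing $S:=\supp(\mathbf{\Omega x})$, so that $\#S=s$, the cone $\mathcal{N}(\mathbf{\Omega x})$ is the normal cone at $\mathbf{\Omega x}$ to the $\ell_1$-ball $\fbrac{\mathbf{y}:\norm{\mathbf{y}}_1\leq\norm{\mathbf{\Omega x}}_1}$, which equals the conic hull of the subdifferential of $\norm{\cdot}_1$ at $\mathbf{\Omega x}$. Concretely, $\mathbf{z}\in\mathcal{N}(\mathbf{\Omega x})$ if and only if there is $t\geq 0$ with $z_i=t\,\sgn\brac{(\mathbf{\Omega x})_i}$ for $i\in S$ and $\abs{z_i}\leq t$ for $i\notin S$.

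For a fixed scaling $t\geq 0$ I would minimize $\norm{\mathbf{g}-\mathbf{z}}_2^2$ over such $\mathbf{z}$ coordinatewise. On $S$ the value of $z_i$ is forced, contributing $\sum_{i\in S}\brac{g_i-t\,\sgn((\mathbf{\Omega x})_i)}^2$; off $S$ the optimal choice is to clip $g_i$ to the interval $[-t,t]$, which gives the residual $\brac{\max\fbrac{\abs{g_i}-t,0}}^2$. Taking expectations and using $\mean g_i=0$, $\mean g_i^2=1$, the support part contributes exactly $s(1+t^2)$, while each off-support coordinate contributes $\mean\brac{\max\fbrac{\abs{g}-t,0}}^2$ for a standard normal $g$. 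I would bound this tail term by the standard Gaussian estimate: integrating by parts and using $\int_t^\infty\phi\leq\phi(t)/t$ for the density $\phi$ gives $\mean\brac{\max\fbrac{\abs{g}-t,0}}^2\leq\sqrt{2/\pi}\,e^{-t^2/2}/t$. This produces
\[
\mean\underset{\mathbf{z}\in\mathcal{N}(\mathbf{\Omega x})}\min\norm{\mathbf{g}-\mathbf{z}}_2^2\leq s\brac{1+t^2}+(p-s)\sqrt{\frac{2}{\pi}}\,\frac{e^{-t^2/2}}{t}.
\]

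Finally I would optimize the free parameter by choosing $t=\sqrt{2\ln(p/s)}$, so that $e^{-t^2/2}=s/p$. The support term then becomes $s+2s\ln(p/s)$, while the target $2s\ln\brac{ep/s}=2s+2s\ln(p/s)$ leaves a slack of exactly $s$ to absorb the off-support term; since $(p-s)\tfrac{s}{p}\leq s$, the leftover factor $\sqrt{2/\pi}/t=\brac{\pi\ln(p/s)}^{-1/2}$ keeps that term below $s$, which closes the estimate and delivers the stated bound with the sharp constant $2$ and the factor $e$ inside the logarithm. The main technical point—and the only step that requires genuine care—is this last balancing: one must verify that the single choice $t=\sqrt{2\ln(p/s)}$ simultaneously yields the correct leading term $2s\ln(p/s)$ and confines the Gaussian tail contribution within the available slack $s$, the tail estimate $\mean\brac{\max\fbrac{\abs{g}-t,0}}^2\leq\sqrt{2/\pi}\,e^{-t^2/2}/t$ being the quantitative heart of the argument.
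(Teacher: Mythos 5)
Your proposal follows the paper's own route almost step for step: Jensen's inequality combined with Proposition~\ref{prGaussianWidthsByPolarCone}, the explicit description (\ref{eqPolarConeAsUnionOverT}) of the polar cone, coordinatewise minimization with the off-support coordinates clipped to $[-t,t]$ (your $\max\fbrac{\abs{g_i}-t,0}$ is exactly the paper's soft-thresholding $S_t(g_i)$), and finally the choice $t=\sqrt{2\ln(p/s)}$. The only place you deviate is the Gaussian tail estimate, and that is exactly where a gap opens. Your bound $\mean\brac{\max\fbrac{\abs{g}-t,0}}^2\leq\sqrt{2/\pi}\,e^{-t^2/2}/t$ is itself correct (it follows from $\int_t^\infty(x-t)^2\phi(x)\,dx=(1+t^2)Q(t)-t\phi(t)$ with $Q(t)=\int_t^\infty\phi$, together with $Q(t)\leq\phi(t)/t$), but it carries a factor $1/t$ that blows up for small $t$. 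Your final balancing then argues: $(p-s)\tfrac{s}{p}\leq s$, and the leftover factor $\brac{\pi\ln(p/s)}^{-1/2}$ ``keeps that term below $s$''. The latter requires $\pi\ln(p/s)\geq 1$, i.e.\ $s\leq e^{-1/\pi}p\approx 0.73\,p$. In the dense regime $e^{-1/\pi}p<s<p$ --- which is perfectly admissible in this paper, e.g.\ when $p=d$ and the cosparsity $l=p-s$ is small --- the factor $\brac{\pi\ln(p/s)}^{-1/2}$ exceeds $1$ and your chain of inequalities fails, so as written the proof does not cover all sparsity levels.

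The gap is repairable, but not by a one-line remark. What you actually need is $(p-s)\tfrac{s}{p}\brac{\pi\ln(p/s)}^{-1/2}\leq s$, equivalently $(1-r)^2\leq\pi\ln(1/r)$ with $r=s/p$; this does hold on all of $(0,1)$ (the function $r\mapsto\pi\ln(1/r)-(1-r)^2$ has derivative $-\pi/r+2(1-r)<0$ and vanishes at $r=1$), but verifying it is precisely the ``genuine care'' you deferred and never carried out. The paper sidesteps the issue entirely by using a cruder but uniform tail estimate: writing $\mean S_t(g)^2=\tfrac{2e^{-t^2/2}}{\sqrt{2\pi}}\int_0^\infty x^2e^{-x^2/2}e^{-xt}\,dx$ and simply dropping the factor $e^{-xt}\leq 1$ gives $\mean S_t(g)^2\leq e^{-t^2/2}$ with no $1/t$ singularity, so the off-support contribution is $(p-s)e^{-t^2/2}=(p-s)s/p\leq s$ for every $0<s\leq p$ and the estimate closes immediately. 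Replace your tail bound by this one (or supply the inequality $(1-r)^2\leq\pi\ln(1/r)$ explicitly) and your proof is complete.
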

\begin{proof}
By Proposition \ref{prGaussianWidthsByPolarCone} and H\"older's inequality
\begin{equation}
\ell\brac{K(\mathbf{\Omega x})\cap \mathbb{B}_2^p}^2\leq\brac{\mean\underset{\mathbf{z}\in\mathcal{N}(\mathbf{\Omega x})}\min\norm{\mathbf{g}-\mathbf{z}}_2}^2\leq\mean\underset{\mathbf{z}\in \mathcal{N}(\mathbf{\Omega x})}\min\norm{\mathbf{g}-\mathbf{z}}_2^2.
\end{equation}
Let $S$ denote the support of $\mathbf{\Omega x}$. Then one can verify that 
\begin{equation}\label{eqPolarConeAsUnionOverT}
\mathcal{N}(\mathbf{\Omega x}) = \bigcup_{t\geq 0} \fbrac{\mathbf{z}\in\RR^p:\,z_i=t\sgn(\mathbf{\Omega x})_i,\;i\in S,\;\abs{z_i}\leq t,\;i\in S^c},
\end{equation}
see \cite[Lemma 9.23]{FoucartRauhut} for a proof. To proceed, we fix $t$, minimize $\norm{\mathbf{g}-\mathbf{z}}_2^2$ over all possible entries $z_j$, take the expectation of the obtained expression and finally optimize over $t$.
According to (\ref{eqPolarConeAsUnionOverT}), we have 
\[
\begin{aligned}
\underset{\mathbf{z}\in\mathcal{N}(\mathbf{\Omega x})}\min\norm{\mathbf{g}-\mathbf{z}}_2^2 &= \underset{\begin{subarray}{c}
t\geq 0\\
\abs{z_i}\leq t,\,i\in S^c
\end{subarray}}\min\sum_{i\in S}\brac{g_i-t\sgn(\mathbf{\Omega x})_i}^2+\sum_{i\in S^c}\brac{g_i-z_i}^2\\
&=\underset{\begin{subarray}{c}
t\geq 0
\end{subarray}}\min\sum_{i\in S}\brac{g_i-t\sgn(\mathbf{\Omega x})_i}^2+\sum_{i\in S^c}S_t(g_i)^2,
\end{aligned}
\]
where $S_t$ is the soft-thresholding operator given by
\[
S_t(x)=\left\{\begin{array}{ll}
x+t, & x<-t,\\
0, & -t\leq x\leq t,\\
x-t, & x>t.
\end{array}\right.
\]
Taking expectation we arrive at
\begin{align}
\mean\underset{\mathbf{z}\in\mathcal{N}(\mathbf{\Omega x})}\min\norm{\mathbf{g}-\mathbf{z}}_2^2 &\leq \mean\sbrac{\sum_{i\in S}\brac{g_i-t\sgn(\mathbf{\Omega x})_i}^2}+\mean\sbrac{\sum_{i\in S^c}S_t(g_i)^2}\notag\\
&=s(1+t^2)+(p-s)\mean S_t(g)^2\label{eqExpectationEstimateWithT}, 
\end{align}
where $g$ is a univariate standard Gaussian random variable. To calculate the expectation of $S_t(g)^2$, we apply the direct integration
\begin{align}
\mean S_t(g)^2 & =\frac{1}{\sqrt{2\pi}}\sbrac{\int\limits_{-\infty}^{-t}(x+t)^2e^{-\frac{x^2}{2}}\,dx+\int\limits_t^{\infty}(x-t)^2e^{-\frac{x^2}{2}}\,dx}\notag\\
& = \frac{2}{\sqrt{2\pi}}\int\limits_{0}^{\infty} x^2e^{-\frac{(x+t)^2}{2}}\,dx
=\frac{2e^{-\frac{t^2}{2}}}{\sqrt{2\pi}}\int\limits_{0}^{\infty} x^2e^{-\frac{x^2}{2}}e^{-xt}\,dx\notag\\
&\leq e^{-\frac{t^2}{2}}\sqrt{\frac{2}{\pi}}\int\limits_{0}^{\infty} x^2e^{-\frac{x^2}{2}}\,dx=e^{-\frac{t^2}{2}}.\label{eqExpectationOfSoftThreshold}
\end{align}
Substituting the estimate (\ref{eqExpectationOfSoftThreshold}) into (\ref{eqExpectationEstimateWithT}) gives
\[
\mean\underset{\mathbf{z}\in\mathcal{N}(\mathbf{\Omega x})}\min\norm{\mathbf{g}-\mathbf{z}}_2^2\leq s(1+t^2)+(p-s)e^{-\frac{t^2}{2}}. 
\]
Setting $t=\sqrt{2\ln(p/s)}$ finally leads to
\[
\ell\brac{K(\mathbf{\Omega x})\cap \mathbb{B}_2^p}^2\leq s\brac{1+2\ln(p/s)}+s=2s\ln(ep/s).
\]
This concludes the proof. 
\end{proof}

By combining inequalities (\ref{eqGWidthByKoneAndSphere}) and (\ref{eqGaussianWidthOfCone}) we obtain
\[
\ell(\mathbf{\Omega}(T))^2\leq 2Bs\ln\frac{ep}{s}.
\]

\begin{proof}[of Theorem \ref{thMainResultForFrame}]
Set $t=\sqrt{2\ln(\eps^{-1})}$. The fact that $E_m\geq m/\sqrt{m+1}$ along with condition (\ref{eqNumberOfMeasurementsForFrame}) yields 
\[
E_m\geq \frac{1}{\sqrt{A}}\ell(\mathbf{\Omega}(T))+t.
\]
Theorem~\ref{thModifiedGordonsEscapeThroughTheMesh} implies
\[
\PP\brac{\underset{\mathbf{x}\in T}\inf\norm{\mathbf{M}\mathbf{x}}_2> 0}\geq\PP\brac{\underset{\mathbf{x}\in T}\inf\norm{\mathbf{M}\mathbf{x}}_2> E_m-\frac{1}{\sqrt{A}}\ell\brac{\mathbf{\Omega}(T)}-t}\geq\ 1-e^{-\frac{t^2}{2}}=1-\eps,
\]
which guarantees that $\ker\mathbf{M}\cap T(\mathbf{x})=\{\mathbf 0\}$ with probability at least $1-\eps$. As the final step we apply Theorem \ref{thRecoveryViaTangentCones}. 
\end{proof}

We now extend Theorem~\ref{thMainResultForFrame} to robust recovery.
\begin{theorem}\label{thNoisyMeasurements} Let $\mathbf{\Omega} \in \mathbb{R}^{p \times d}$ be a frame with frame bounds $A,B > 0$ and
let $\mathbf{x}$ be $l$-cosparse and $s = p -l$. For a random draw $\mathbf{M}\in\RR^{m\times d}$ of a Gaussian random matrix, 
let noisy measurements $\mathbf{y}=\mathbf{M}\mathbf{x}+\mathbf{w}$ be given with $\norm{\mathbf{w}}_2\leq\eta$. If for $0<\eps<1$ and some $\tau>0$
\begin{equation}\label{eqNumberOfMeasurementsForFrameNoise}
\frac{m^2}{m+1}\geq \frac{2Bs}{A}\brac{\sqrt{\ln\frac{ep}{s}}+\sqrt{\frac{A\ln(\eps^{-1})}{Bs}}+\tau\sqrt{\frac{A}{2sB}}}^2,
\end{equation}
then with probability at least $1-\eps$, any minimizer $\mathbf{\hat x}$ of (\ref{eqProblemP1Noise}) satisfies
\[
\norm{\mathbf{x}-\mathbf{\hat{x}}}_2\leq\frac{2\eta}{\tau}.
\] 
\end{theorem}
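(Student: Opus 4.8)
The plan is to mimic the proof of Theorem~\ref{thMainResultForFrame} almost verbatim, replacing the goal ``$\inf\norm{\mathbf{Mv}}_2 > 0$'' by the stronger goal ``$\inf\norm{\mathbf{Mv}}_2 \geq \tau$'' over the normalized tangent cone, and then feeding this into the noisy recovery criterion of Theorem~\ref{thRecoveryViaTangentConesWithNoise}. Concretely, I would set $T := T(\mathbf{x})\cap\mathbb{S}^{d-1}$ and $t := \sqrt{2\ln(\eps^{-1})}$, so that the failure probability $e^{-t^2/2}$ appearing in Theorem~\ref{thModifiedGordonsEscapeThroughTheMesh} equals exactly $\eps$.

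The core of the argument is the deterministic inequality $E_m - \frac{1}{\sqrt{A}}\ell(\mathbf{\Omega}(T)) - t \geq \tau$, which I would verify by chaining together the three ingredients already established in the excerpt: the lower bound $E_m \geq m/\sqrt{m+1}$, the Gaussian width estimate $\ell(\mathbf{\Omega}(T))^2 \leq 2Bs\ln(ep/s)$, and the hypothesis \eqref{eqNumberOfMeasurementsForFrameNoise}. Rearranging \eqref{eqNumberOfMeasurementsForFrameNoise} into the form
\[
\frac{m}{\sqrt{m+1}} \geq \sqrt{\frac{2Bs}{A}}\,\sqrt{\ln\frac{ep}{s}} + \sqrt{2\ln(\eps^{-1})} + \tau,
\]
one recognizes the right-hand side as precisely $\frac{1}{\sqrt{A}}\sqrt{2Bs\ln(ep/s)} + t + \tau$, which by the two bounds above dominates $\frac{1}{\sqrt{A}}\ell(\mathbf{\Omega}(T)) + t + \tau$. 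The only bookkeeping is to see that factoring $\sqrt{2Bs/A}$ out of the squared bracket in \eqref{eqNumberOfMeasurementsForFrameNoise} sends the free summand $\tau$ to $\tau\sqrt{A/(2sB)}$, matching the last term inside the parentheses; squaring is monotone here since all quantities are nonnegative.

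With this inequality in hand, Theorem~\ref{thModifiedGordonsEscapeThroughTheMesh} applied to $T$ yields
\[
\PP\brac{\underset{\mathbf{v}\in T}\inf\norm{\mathbf{Mv}}_2 > \tau} \geq \PP\brac{\underset{\mathbf{v}\in T}\inf\norm{\mathbf{Mv}}_2 > E_m - \tfrac{1}{\sqrt{A}}\ell(\mathbf{\Omega}(T)) - t} \geq 1 - e^{-t^2/2} = 1-\eps.
\]
On this event the hypothesis \eqref{eqInfOverCone} of Theorem~\ref{thRecoveryViaTangentConesWithNoise} holds with the given $\tau$, so any minimizer $\mathbf{\hat x}$ of \eqref{eqProblemP1Noise} satisfies $\norm{\mathbf{x}-\mathbf{\hat x}}_2 \leq 2\eta/\tau$, which is the claim.

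I do not expect a genuine obstacle: the probabilistic and convex-geometric heavy lifting is entirely absorbed into the already-proved Theorems~\ref{thModifiedGordonsEscapeThroughTheMesh} and \ref{thRecoveryViaTangentConesWithNoise} and the width bound. The only point demanding care is the algebraic bookkeeping of the extra $\tau$ term when translating between the squared measurement condition \eqref{eqNumberOfMeasurementsForFrameNoise} and the linear Gordon threshold $E_m - \frac{1}{\sqrt{A}}\ell(\mathbf{\Omega}(T)) - t \geq \tau$; getting the factor $\sqrt{A/(2sB)}$ right is what makes \eqref{eqNumberOfMeasurementsForFrameNoise} specialize back to \eqref{eqNumberOfMeasurementsForFrame} in the noiseless limit $\tau\to 0$.
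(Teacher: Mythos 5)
Your proposal is correct and matches the paper's own proof essentially verbatim: the paper likewise sets $t=\sqrt{2\ln(\eps^{-1})}$, uses $E_m\geq m/\sqrt{m+1}$ together with the width bound $\ell(\mathbf{\Omega}(T))^2\leq 2Bs\ln(ep/s)$ to deduce $E_m-\frac{1}{\sqrt{A}}\ell(\mathbf{\Omega}(T))-t\geq\tau$ from \eqref{eqNumberOfMeasurementsForFrameNoise}, and then applies Theorem~\ref{thModifiedGordonsEscapeThroughTheMesh} followed by Theorem~\ref{thRecoveryViaTangentConesWithNoise}. Your explicit square-root bookkeeping of the $\tau\sqrt{A/(2sB)}$ term is exactly the computation the paper leaves implicit behind the phrase ``our previous considerations.''
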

\begin{proof}
We use the recovery condition stated in Theorem \ref{thRecoveryViaTangentConesWithNoise}. Set $t=\sqrt{2\ln(\eps^{-1})}$. Our previous considerations  and the choice of $m$ in (\ref{eqNumberOfMeasurementsForFrameNoise}) guarantee that
\[
E_m-\frac{1}{\sqrt{A}}\ell(\mathbf{\Omega}(T))-t\geq\frac{m}{\sqrt{m+1}}-\sqrt{\frac{2Bs}{A}\ln\frac{ep}{s}}-\sqrt{2\ln(\eps^{-1})}\geq\tau.
\]  
The monotonicity of probability and Theorem \ref{thModifiedGordonsEscapeThroughTheMesh} yield
\[
\PP\brac{\underset{\mathbf{x}\in T}\inf\norm{\mathbf{M}\mathbf{x}}_2\geq\tau}\geq\PP\brac{\underset{\mathbf{x}\in T}\inf\norm{\mathbf{M}\mathbf{x}}_2\geq E_m-\frac{1}{\sqrt{A}}\ell(\mathbf{\Omega}(T))-t}\geq 1-\eps.
\]
\end{proof}


\section{Uniform recovery}

This section is dedicated to the proof of the uniform recovery result in Theorem~\ref{thUniformRecoveryWithFrame}. 
It relies on the $\mathbf{\Omega}$-null space property, which extends the null space property known
from the standard synthesis sparsity case, see e.g.~\cite{codade09,FoucartRauhut,grni03}. 
We analyze this property directly for Gaussian random matrices with similar techniques as used in the previous section.

\subsection{$\mathbf{\Omega}$-null space property}

Let us start with the $\mathbf{\Omega}$-null space property which is a sufficient condition for the exact reconstruction of every cosparse vector. 
\begin{definition}\label{defNSP}
A matrix $\mathbf{M}\in\RR^{m\times d}$ is said to satisfy the $\mathbf{\Omega}$-null space property of order $s$ with constant $0<\rho<1$, if for any set $\Lambda\subset [p]$ with $\# \Lambda\geq p-s$ it holds
\begin{equation}\label{eqNSP}
\norm{\mathbf{\Omega}_{\Lambda^c}\mathbf{v}}_1\leq\rho\norm{\mathbf{\Omega}_{\Lambda}\mathbf{v}}_1\;\;\;\mbox{for all}\;\;\mathbf{v}\in\ker{\mathbf M}.
\end{equation}
\end{definition}
If $\mathbf{\Omega}$ is the identity map $\Id:\RR^d\to\RR^d$, then (\ref{eqNSP}) is the standard null space property. We start with a result on exact recovery of cosparse vectors.
\begin{theorem}\label{thRecoveryWithOmegaNSP}
If $\mathbf{M}\in\RR^{m\times d}$ satisfies the $\mathbf{\Omega}$-null space property of order $s$ with $0<\rho<1$, then every $l$-cosparse vector $\mathbf{x}\in\RR^d$ with $l=p-s$ is the unique solution of (\ref{eqProblemP1}) with $\mathbf{y}=\mathbf{Mx}$.
\end{theorem}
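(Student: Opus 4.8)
The plan is to prove the contrapositive-flavored uniqueness statement directly: I would show that for any $l$-cosparse $\mathbf{x}$ and any competing feasible vector $\mathbf{z}$ with $\mathbf{Mz}=\mathbf{Mx}$ and $\mathbf{z}\neq\mathbf{x}$, the strict inequality $\norm{\mathbf{\Omega x}}_1 < \norm{\mathbf{\Omega z}}_1$ holds, so that $\mathbf{x}$ is the unique minimizer. The natural variable is the difference $\mathbf{v}:=\mathbf{z}-\mathbf{x}$, which lies in $\ker\mathbf{M}$ and is nonzero, putting us in a position to invoke the $\mathbf{\Omega}$-null space property \eqref{eqNSP}.

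First I would fix the cosupport. Let $\Lambda\subset[p]$ be the cosupport of $\mathbf{x}$, so $\#\Lambda=l\geq p-s$ and $(\mathbf{\Omega x})_j=0$ for all $j\in\Lambda$; equivalently $\mathbf{\Omega x}$ is supported on $\Lambda^c$ with $\#\Lambda^c=s$. Then I would split $\norm{\mathbf{\Omega z}}_1=\norm{\mathbf{\Omega}(\mathbf{x}+\mathbf{v})}_1$ over the two index blocks $\Lambda^c$ and $\Lambda$. On $\Lambda$ the vector $\mathbf{\Omega x}$ vanishes, so $\norm{\mathbf{\Omega}_\Lambda(\mathbf{x}+\mathbf{v})}_1=\norm{\mathbf{\Omega}_\Lambda\mathbf{v}}_1$. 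On $\Lambda^c$ I would apply the reverse triangle inequality $\norm{\mathbf{\Omega}_{\Lambda^c}(\mathbf{x}+\mathbf{v})}_1\geq\norm{\mathbf{\Omega}_{\Lambda^c}\mathbf{x}}_1-\norm{\mathbf{\Omega}_{\Lambda^c}\mathbf{v}}_1=\norm{\mathbf{\Omega x}}_1-\norm{\mathbf{\Omega}_{\Lambda^c}\mathbf{v}}_1$. Adding these gives the chain
\[
\norm{\mathbf{\Omega z}}_1 \geq \norm{\mathbf{\Omega x}}_1-\norm{\mathbf{\Omega}_{\Lambda^c}\mathbf{v}}_1+\norm{\mathbf{\Omega}_\Lambda\mathbf{v}}_1.
\]

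Now I would invoke the $\mathbf{\Omega}$-null space property: since $\#\Lambda\geq p-s$ and $\mathbf{v}\in\ker\mathbf{M}$, \eqref{eqNSP} gives $\norm{\mathbf{\Omega}_{\Lambda^c}\mathbf{v}}_1\leq\rho\norm{\mathbf{\Omega}_\Lambda\mathbf{v}}_1$. Substituting yields
\[
\norm{\mathbf{\Omega z}}_1 \geq \norm{\mathbf{\Omega x}}_1+(1-\rho)\norm{\mathbf{\Omega}_\Lambda\mathbf{v}}_1 \geq \norm{\mathbf{\Omega x}}_1,
\]
because $1-\rho>0$. To upgrade this to a strict inequality and hence uniqueness, I must rule out $\norm{\mathbf{\Omega}_\Lambda\mathbf{v}}_1=0$. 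If that term vanished, then the null space property would force $\norm{\mathbf{\Omega}_{\Lambda^c}\mathbf{v}}_1\leq\rho\cdot 0=0$, so $\mathbf{\Omega v}=\mathbf{0}$; since $\mathbf{\Omega}$ is a frame it is injective (the lower frame bound $A>0$ gives $\norm{\mathbf{\Omega v}}_2^2\geq A\norm{\mathbf{v}}_2^2$), whence $\mathbf{v}=\mathbf{0}$, contradicting $\mathbf{z}\neq\mathbf{x}$. Therefore $\norm{\mathbf{\Omega}_\Lambda\mathbf{v}}_1>0$ and $\norm{\mathbf{\Omega z}}_1>\norm{\mathbf{\Omega x}}_1$, establishing that $\mathbf{x}$ is the unique minimizer. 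The only genuinely nonroutine point—and the step I would flag—is this final nondegeneracy argument: the analysis null space property alone gives only $\geq$, and the injectivity of the frame analysis operator is exactly what converts it into the strict inequality needed for \emph{uniqueness} rather than mere optimality.
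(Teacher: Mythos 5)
Your proof is correct. It takes a somewhat different route than the paper: the paper obtains Theorem~\ref{thRecoveryWithOmegaNSP} as an immediate corollary of the $\ell_1$-stability estimate of Theorem~\ref{thConeCostraint}, which for any minimizer $\mathbf{\hat x}$ gives $\norm{\mathbf{\Omega}(\mathbf{x}-\mathbf{\hat x})}_1\leq\frac{2(1+\rho)}{1-\rho}\sigma_s(\mathbf{\Omega x})_1=0$ once $\mathbf{\Omega x}$ is $s$-sparse, and then concludes $\mathbf{\hat x}=\mathbf{x}$. You instead prove strict suboptimality of every feasible $\mathbf{z}\neq\mathbf{x}$ directly, using the same core mechanics (cosupport split, triangle inequalities, the $\mathbf{\Omega}$-null space property) but organized around the difference vector from the start, together with an explicit nondegeneracy step ruling out $\norm{\mathbf{\Omega}_{\Lambda}\mathbf{v}}_1=0$ via the lower frame bound. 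Your version is more self-contained and makes two things explicit that the paper leaves implicit: (i) uniqueness is obtained without any appeal to the existence of a minimizer, since every competitor is shown to be strictly worse; and (ii) the injectivity of $\mathbf{\Omega}$ is genuinely needed --- in the paper's route this same point is hidden in the passage from $\mathbf{\Omega}(\mathbf{x}-\mathbf{\hat x})=\mathbf{0}$ to $\mathbf{\hat x}=\mathbf{x}$. What the paper's route buys in exchange is generality: Theorem~\ref{thConeCostraint} covers approximately cosparse vectors, and exact recovery falls out as the special case $\sigma_s(\mathbf{\Omega x})_1=0$, so no separate argument for the exact case is needed.
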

This theorem follows immediately from the next result, which also implies a certain stability estimate in $\ell_1$.

 \begin{theorem}\label{thConeCostraint}
Let $\mathbf{x}\in\RR^d$ be an arbitrary vector and $\mathbf{\hat x}$ be a solution of (\ref{eqProblemP1}) with $\mathbf{y}=\mathbf{M}\mathbf{x}$, where $\mathbf{M}\in\RR^{m\times d}$ 
satisfies the $\mathbf{\Omega}$-null space property of order $s$ with constant $\rho \in (0,1)$. Then
\begin{equation}\label{eqConeConstraint}
\norm{\mathbf{\Omega}\brac{\mathbf{x}-\mathbf{\hat{x}}}}_1\leq\frac{2(1+\rho)}{1-\rho} \sigma_s(\mathbf{\Omega x})_1.
\end{equation}
\end{theorem}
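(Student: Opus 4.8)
The plan is to mimic the classical null space property argument from synthesis compressed sensing, transported to the analysis side via $\mathbf{\Omega}$. Set $\mathbf{v}:=\mathbf{\hat x}-\mathbf{x}$. Since $\mathbf{M}\mathbf{\hat x}=\mathbf{y}=\mathbf{M}\mathbf{x}$, the difference $\mathbf{v}$ lies in $\ker\mathbf{M}$, so the $\mathbf{\Omega}$-null space property is available for $\mathbf{v}$. Let $S\subset[p]$ be an index set of $s$ largest-in-modulus entries of $\mathbf{\Omega x}$, so that $\norm{(\mathbf{\Omega x})_{S^c}}_1=\sigma_s(\mathbf{\Omega x})_1$. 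Applying Definition~\ref{defNSP} with $\Lambda=S^c$ (which has $\#\Lambda=p-s\geq p-s$, hence $\Lambda^c=S$) gives the key inequality $\norm{(\mathbf{\Omega v})_S}_1\leq\rho\norm{(\mathbf{\Omega v})_{S^c}}_1$.

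Next I would exploit the optimality of $\mathbf{\hat x}$, namely $\norm{\mathbf{\Omega}\mathbf{\hat x}}_1\leq\norm{\mathbf{\Omega x}}_1$. Writing $\mathbf{\Omega}\mathbf{\hat x}=\mathbf{\Omega x}+\mathbf{\Omega v}$, splitting the $\ell_1$-norm over $S$ and $S^c$, and applying the triangle inequality (the ordinary one on $S^c$, the reverse one on $S$) produces, after the $\norm{(\mathbf{\Omega x})_S}_1$ terms cancel, the cone-type estimate
\[
\norm{(\mathbf{\Omega v})_{S^c}}_1\leq\norm{(\mathbf{\Omega v})_S}_1+2\sigma_s(\mathbf{\Omega x})_1.
\]

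Combining this with the null space property inequality, I would substitute $\norm{(\mathbf{\Omega v})_S}_1\leq\rho\norm{(\mathbf{\Omega v})_{S^c}}_1$ into the right-hand side and absorb the $\rho\norm{(\mathbf{\Omega v})_{S^c}}_1$ term into the left, obtaining $(1-\rho)\norm{(\mathbf{\Omega v})_{S^c}}_1\leq2\sigma_s(\mathbf{\Omega x})_1$; here $\rho<1$ is precisely what is needed to divide. Finally, writing $\norm{\mathbf{\Omega v}}_1=\norm{(\mathbf{\Omega v})_S}_1+\norm{(\mathbf{\Omega v})_{S^c}}_1\leq(1+\rho)\norm{(\mathbf{\Omega v})_{S^c}}_1$ and inserting the bound just derived yields the claimed constant $\frac{2(1+\rho)}{1-\rho}$; since $\norm{\mathbf{\Omega}(\mathbf{x}-\mathbf{\hat x})}_1=\norm{\mathbf{\Omega v}}_1$, this is exactly \eqref{eqConeConstraint}. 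No step presents a genuine obstacle; the only point demanding care is getting the orientation of the null space property right, i.e.\ identifying the support $S$ of the $s$ largest entries with the \emph{complement} $\Lambda^c$ of the admissible index set $\Lambda$ in Definition~\ref{defNSP}, and ensuring the reverse triangle inequality is applied on the correct block so that the $\sigma_s$ term emerges with the factor $2$.
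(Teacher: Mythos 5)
Your proposal is correct and follows essentially the same route as the paper's proof: optimality of $\mathbf{\hat x}$ plus the triangle inequality on the two blocks yields the cone constraint, the $\mathbf{\Omega}$-null space property absorbs the small-block term to bound $\norm{(\mathbf{\Omega v})_{S^c}}_1$ by $\tfrac{2}{1-\rho}\sigma_s(\mathbf{\Omega x})_1$, and one more application of the property gives the factor $1+\rho$. The only (immaterial) difference is that you fix $S$ as the $s$ largest entries of $\mathbf{\Omega x}$ at the outset, whereas the paper argues with an arbitrary admissible set $\Lambda$ and minimizes $\norm{\mathbf{\Omega}_{\Lambda}\mathbf{x}}_1$ over such sets at the end.
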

\begin{proof}
Since $\mathbf{\hat x}$ is the solution of (\ref{eqProblemP1}), we must have $\norm{\mathbf{\Omega\hat x}}_1\leq\norm{\mathbf{\Omega x}}_1$. Take any  $\Lambda\subset [p]$ with $\# \Lambda\geq p-s$. Then
\[
\norm{\mathbf{\Omega}_{\Lambda^c}\mathbf{\hat x}}_1+\norm{\mathbf{\Omega}_{\Lambda}\mathbf{\hat x}}_1\leq\norm{\mathbf{\Omega}_{\Lambda^c}\mathbf{x}}_1+\norm{\mathbf{\Omega}_{\Lambda}\mathbf{x}}_1.
\]
By the triangle inequality, the vector $\mathbf v := \mathbf{x}-\mathbf{\hat{x}}$ satisfies
\[
\norm{\mathbf{\Omega}_{\Lambda^c}\mathbf{x}}_1-\norm{\mathbf{\Omega}_{\Lambda^c} \mathbf{v}}_1+\norm{\mathbf{\Omega}_{\Lambda} \mathbf{v}}_1-\norm{\mathbf{\Omega}_{\Lambda}\mathbf{x}}_1\leq\norm{\mathbf{\Omega}_{\Lambda^c}\mathbf{x}}_1+\norm{\mathbf{\Omega}_{\Lambda}\mathbf{x}}_1,
\]
which implies
\[
\norm{\mathbf{\Omega}_{\Lambda} \mathbf{v}}_1\leq\norm{\mathbf{\Omega}_{\Lambda^c} \mathbf{v}}_1+2\norm{\mathbf{\Omega}_{\Lambda}\mathbf{x}}_1 
\leq \rho \norm{\mathbf{\Omega}_{\Lambda}\mathbf{v}}_1 + 2\norm{\mathbf{\Omega}_{\Lambda}\mathbf{x}}_1. 
\]
Hereby, we have applied the $\mathbf{\Omega}$-null space property \eqref{eqNSP}. Rearranging and choosing a set $\Lambda$ of size $p-s$ which minimizes $
\norm{\mathbf{\Omega}_{\Lambda}\mathbf{x}}_1$ yields 
\[
\norm{\mathbf{\Omega}_{\Lambda}\mathbf{v}}_1\leq\frac{2}{1-\rho} \sigma_s(\mathbf{\Omega x})_1.
\]
Furthermore, another application of the $\mathbf{\Omega}$-null space property gives
\[
\norm{\mathbf{\Omega}\mathbf{v}}_1 = \norm{\mathbf{\Omega}_{\Lambda}\mathbf{v}}_1 +  \norm{\mathbf{\Omega}_{\Lambda^c}\mathbf{v}}_1
\leq (1+\rho)  \norm{\mathbf{\Omega}_{\Lambda}\mathbf{v}}_1  \leq \frac{2(1+\rho)}{1-\rho} \sigma_s(\mathbf{\Omega x})_1.
\]
This completes the proof. 
\end{proof}

In order to provide a suitable stability estimate in $\ell_2$ we require a slightly stronger version
of the $\mathbf\Omega$-null space property. 
\begin{definition}\label{defL2StableNSP}
A matrix $\mathbf{M}\in\RR^{m\times d}$ is said to satisfy the $\ell_2$-stable $\mathbf{\Omega}$-null space property of order $s$ with constant $0<\rho<1$, if, for any set $\Lambda\subset [p]$ with $\# \Lambda\geq p-s$, it holds
\begin{equation}\label{eqL2NSP}
\norm{\mathbf{\Omega}_{\Lambda^c}\mathbf{v}}_2\leq\frac{\rho}{\sqrt{s}}\norm{\mathbf{\Omega}_{\Lambda}\mathbf{v}}_1\;\;\;\mbox{for all}\;\;\mathbf{v}\in\ker{\mathbf M}.
\end{equation}
\end{definition}
\begin{remark}\label{rem:l1l2}
The H\"older's inequality implies $\norm{\mathbf\Omega_{\Lambda^c}\mathbf v}_1\leq \sqrt s\norm{\mathbf\Omega_{\Lambda^c}\mathbf v}_2$ for any set $\Lambda \subset [p]$ with $\#(\Lambda^c) = s$.
This means that if $\mathbf M\in\RR^{m\times d}$ satisfies the $\ell_2$-stable $\mathbf\Omega$-null space property of order $s$ with constant $0<\rho<1$, then it satisfies the $\mathbf\Omega$-null space property of the same order and with the same constant.
\end{remark}
\begin{theorem}\label{thRecoveryWithL2NSP}
Let $\mathbf{M}\in\RR^{m\times d}$ satisfy the $\ell_2$-stable $\mathbf{\Omega}$-null space property of order $s$ with constant $0<\rho<1$. Then for any $\mathbf{x}\in\RR^d$ the solution $\mathbf{\hat x}$ of (\ref{eqProblemP1}) with $\mathbf{y}=\mathbf{M}\mathbf{x}$ approximates the vector $\mathbf{x}$ with $\ell_2$-error
\begin{equation}\label{eqL2StableRecovery}
\norm{\mathbf{x}-\mathbf{\hat{x}}}_2\leq\frac{2(1+\rho)^2}{\sqrt{A}(1-\rho)}\frac{\sigma_{s}(\mathbf{\Omega x})_1}{\sqrt{s}}.
\end{equation}
\end{theorem}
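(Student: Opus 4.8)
The plan is to reduce the desired $\ell_2$-error estimate to a bound on $\norm{\mathbf{\Omega v}}_2$, where $\mathbf v:=\mathbf x-\mathbf{\hat x}$, and then to control this $\ell_2$-norm by $\norm{\mathbf{\Omega v}}_1$, which is already estimated in Theorem~\ref{thConeCostraint}. First I would observe that $\mathbf M\mathbf{\hat x}=\mathbf y=\mathbf M\mathbf x$ forces $\mathbf v\in\ker\mathbf M$, so that both the $\mathbf\Omega$-null space property and its $\ell_2$-stable version \eqref{eqL2NSP} are at our disposal for $\mathbf v$. The lower frame bound gives $A\norm{\mathbf v}_2^2\leq\norm{\mathbf{\Omega v}}_2^2$, i.e.\ $\norm{\mathbf v}_2\leq A^{-1/2}\norm{\mathbf{\Omega v}}_2$, so that the whole problem is reduced to estimating $\norm{\mathbf{\Omega v}}_2$.

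The core step is the inequality
\[
\norm{\mathbf{\Omega v}}_2\leq\frac{1+\rho}{\sqrt s}\norm{\mathbf{\Omega v}}_1 .
\]
To prove it I would let $S_0\subset[p]$ index the $s$ largest (in modulus) entries of $\mathbf{\Omega v}$ and split $\norm{\mathbf{\Omega v}}_2\leq\norm{(\mathbf{\Omega v})_{S_0}}_2+\norm{(\mathbf{\Omega v})_{S_0^c}}_2$. The $\ell_2$-stable $\mathbf\Omega$-null space property \eqref{eqL2NSP}, applied with $\Lambda=S_0^c$ (which has $\#\Lambda=p-s$), bounds the head term by $\norm{(\mathbf{\Omega v})_{S_0}}_2\leq\frac{\rho}{\sqrt s}\norm{(\mathbf{\Omega v})_{S_0^c}}_1$. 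For the tail term I would invoke the standard sorting argument: partition $S_0^c$ into consecutive blocks $S_1,S_2,\ldots$ of size $s$ (the last possibly smaller) in order of decreasing magnitude. Since every entry of $(\mathbf{\Omega v})_{S_j}$ is at most $\norm{(\mathbf{\Omega v})_{S_{j-1}}}_1/s$ in modulus, one gets $\norm{(\mathbf{\Omega v})_{S_j}}_2\leq\sqrt s\,\norm{(\mathbf{\Omega v})_{S_j}}_\infty\leq s^{-1/2}\norm{(\mathbf{\Omega v})_{S_{j-1}}}_1$, and summing over $j\geq1$ telescopes to $\norm{(\mathbf{\Omega v})_{S_0^c}}_2\leq s^{-1/2}\norm{\mathbf{\Omega v}}_1$. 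Adding the two contributions and using $\norm{(\mathbf{\Omega v})_{S_0^c}}_1\leq\norm{\mathbf{\Omega v}}_1$ yields the displayed inequality.

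Finally I would combine this with the $\ell_1$ estimate. By Remark~\ref{rem:l1l2} the $\ell_2$-stable $\mathbf\Omega$-null space property implies the ordinary $\mathbf\Omega$-null space property of the same order and constant, so Theorem~\ref{thConeCostraint} applies and gives $\norm{\mathbf{\Omega v}}_1\leq\frac{2(1+\rho)}{1-\rho}\sigma_s(\mathbf{\Omega x})_1$. Chaining the three inequalities,
\[
\norm{\mathbf x-\mathbf{\hat x}}_2\leq\frac{1}{\sqrt A}\cdot\frac{1+\rho}{\sqrt s}\cdot\frac{2(1+\rho)}{1-\rho}\sigma_s(\mathbf{\Omega x})_1=\frac{2(1+\rho)^2}{\sqrt A(1-\rho)}\frac{\sigma_s(\mathbf{\Omega x})_1}{\sqrt s},
\]
which is exactly \eqref{eqL2StableRecovery}. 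I expect the main obstacle to be the tail estimate: the tempting shortcut $\norm{(\mathbf{\Omega v})_{S_0^c}}_2\leq s^{-1/2}\norm{(\mathbf{\Omega v})_{S_0^c}}_1$ is \emph{false} (a single large tail entry breaks it), so the block-decomposition argument is genuinely needed. It forces the full norm $\norm{\mathbf{\Omega v}}_1$ rather than only the tail to appear; fortunately this still produces precisely the factor $(1+\rho)$ required for the stated constant.
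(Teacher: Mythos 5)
Your proof is correct and takes essentially the same route as the paper's: the same decomposition of $\mathbf{\Omega v}$ into the head $S_0$ of the $s$ largest entries (bounded via the $\ell_2$-stable $\mathbf{\Omega}$-null space property with $\Lambda=S_0^c$) plus sorted blocks of size $s$ for the tail, followed by the same combination of Remark~\ref{rem:l1l2}, Theorem~\ref{thConeCostraint}, and the lower frame bound. Your closing observation is also on point: the paper's chain of inequalities likewise ends up with the full norm $\norm{\mathbf{\Omega v}}_1$ from the block argument, which is exactly what produces the factor $(1+\rho)$ in the constant.
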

Inequality (\ref{eqL2StableRecovery}) means that $l$-cosparse vectors are exactly recovered by (\ref{eqProblemP1}) and vectors $\mathbf{x}\in\RR^d$, such that $\mathbf{\Omega x}$ is close to an $s$-sparse vector in $\ell_1$, can be well approximated in $\ell_2$ by a solution of (\ref{eqProblemP1}). The proof goes along the same lines as in the standard case in \cite{FoucartRauhut}. The novelty here is that we exploit the sparsity not of the signal itself, but of its analysis representation. So first we extend the $\ell_1$-error estimate above to an $\ell_2$-error estimate for $\mathbf{\Omega x}$ 
and use the fact that $\mathbf\Omega$ is a frame to bound the $\ell_2$-error $\norm{\mathbf x-\mathbf{\hat x}}_2$. The statement of Theorem \ref{thRecoveryWithL2NSP} was generalized to the setting of a perturbed frame and imprecise knowledge of the measurement matrix in \cite[Theorem 3.1]{AldroubiChenPowell}.

\begin{proof}[of Theorem \ref{thRecoveryWithL2NSP}]
We define the vector $\mathbf{v}:=\mathbf{\hat x}-\mathbf{x}\in\ker\mathbf{M}$ and denote by $S_0\subset [p]$ an index set of $s$ largest absolute entries of $\mathbf{\Omega v}$. 
Since $\# S_0^c=p-s$ and $\mathbf{M}\in\RR^{m\times d}$ satisfies the $\ell_2$-stable $\mathbf{\Omega}$-null space property, it follows
\begin{equation}\label{eqEstimateBySNSP}
\norm{(\mathbf{\Omega v})_{S_0}}_2\leq\frac{\rho}{\sqrt{s}}\norm{\mathbf{\Omega}_{S_0^c} \mathbf{v}}_1\leq\frac{\rho}{\sqrt{s}}\norm{\mathbf{\Omega v}}_1.
\end{equation}
We partition the indices of $S_0^c$ into subsets $S_1$, $S_2$, $\ldots$ of size $s$ in order of decreasing magnitude of $(\Omega v)_i$. Then for each $k\in S_{i+1}$, $i\geq 0$, 
\[
\abs{(\mathbf{\Omega v})_k}\leq\frac{1}{s}\sum_{j\in S_i}\abs{(\mathbf{\Omega v})_j}
\qquad \mbox{ and } \qquad
\norm{(\mathbf{\Omega v})_{S_{i+1}}}_2\leq\frac{1}{\sqrt s}\norm{(\mathbf{\Omega v})_{S_i}}_1.
\]
Along with the triangle inequality this gives
\begin{equation}\label{eqEstimateByDescendingIndexes}
\norm{(\mathbf{\Omega v})_{S_0^c}}_2\leq\sum_{i\geq 1}\norm{(\mathbf{\Omega v})_{S_i}}_2\leq\frac{1}{\sqrt s}\sum_{i\geq 0}\norm{(\mathbf{\Omega v})_{S_i}}_1=\frac{1}{\sqrt s}\norm{\mathbf{\Omega v}}_1.
\end{equation}
Inequalities (\ref{eqEstimateBySNSP}) and (\ref{eqEstimateByDescendingIndexes}) together with Remark~\ref{rem:l1l2} and Theorem~\ref{thConeCostraint} 
yield
\begin{equation}\label{eqEstimateOfL2NormBySNSP}
\norm{ \mathbf{\Omega v}}_2\leq\norm{(\mathbf{\Omega v})_{S_0}}_2+\norm{(\mathbf{\Omega v})_{S_0^c}}_2 \leq
\frac{1+\rho}{\sqrt s}\norm{ \mathbf{\Omega v}}_1 \leq \frac{2(1+\rho)^2}{(1-\rho)\sqrt s} \sigma_s(\mathbf{\Omega x})_1.
\end{equation}
Finally, we use that $\mathbf{\Omega}$ is a frame with lower frame bound $A$ to conclude that
\[
\norm{\mathbf{x}-\mathbf{\hat{x}}}_2\leq\frac{1}{\sqrt A}\norm{\mathbf{\Omega x}-\mathbf{\Omega\hat{x}}}_2 \leq 
\frac{2(1+\rho)^2}{\sqrt{A}(1-\rho) \sqrt{s}} \sigma_s(\mathbf{\Omega x})_1.
\]
This completes the proof. 
\end{proof}


When the measurements are given with some error, the author in \cite{Foucart} introduced the following extension of the $\mathbf{\Omega}$-null space property in order to guarantee robustness of the recovery.
\begin{definition}\label{defL2RobustStableNSP}
A matrix $\mathbf{M}\in\RR^{m\times d}$ is said to satisfy the robust $\ell_2$-stable $\mathbf{\Omega}$-null space property of order $s$ with constant $0<\rho<1$ and $\tau>0$, if for any set $\Lambda\subset [p]$ with $\# \Lambda\geq p-s$ it holds
\begin{equation}\label{eqRobustL2NSP}
\norm{\mathbf{\Omega}_{\Lambda^c}\mathbf{v}}_2\leq\frac{\rho}{\sqrt{s}}\norm{\mathbf{\Omega}_{\Lambda}\mathbf{v}}_1+\tau\norm{\mathbf{Mv}}_2\;\;\;\mbox{for all}\;\;\mathbf{v}\in\RR^d.
\end{equation}
\end{definition}
If $\mathbf v\in\ker\mathbf M$, the term $\norm{\mathbf{Mv}}_2$ vanishes, and we see that the robust $\ell_2$-stable $\mathbf{\Omega}$-null space property implies 
the $\ell_2$-stable $\mathbf{\Omega}$-null space property. 
The robust $\ell_2$-stable null space property guarantees the stability and robustness of the $\ell_1$-minimization (\ref{eqProblemP1Noise}). 
\begin{theorem}\label{thRecoveryWithRobustL2NSP}
Let $\mathbf{M}\in\RR^{m\times d}$ satisfy the robust $\ell_2$-stable $\mathbf{\Omega}$-null space property of order $s$ with constants $0<\rho<1$ and $\tau>0$. Then for any $\mathbf{x}\in\RR^d$ the solution $\mathbf{\hat x}$ of (\ref{eqProblemP1Noise}) with $\mathbf{y}=\mathbf{M}\mathbf{x}+\mathbf{w}$, $\norm{\mathbf w}_2\leq\eta$, approximates the vector $\mathbf{x}$ with $\ell_2$-error
\begin{equation}\label{eqRobustL2StableRecovery}
\norm{\mathbf{x}-\mathbf{\hat{x}}}_2\leq\frac{2(1+\rho)^2}{\sqrt{A}(1-\rho)}\frac{\sigma_{s}(\mathbf{\Omega x})_1}{\sqrt{s}}+\frac{2\tau(3+\rho)}{\sqrt A(1-\rho)}\eta.
\end{equation}
\end{theorem}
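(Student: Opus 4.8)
The plan is to mirror the arguments of Theorems~\ref{thConeCostraint} and \ref{thRecoveryWithL2NSP}, carrying a noise term through each estimate. The observation that makes the noisy case work is that the robust $\ell_2$-stable $\mathbf{\Omega}$-null space property \eqref{eqRobustL2NSP} is assumed for \emph{all} $\mathbf{v}\in\RR^d$ rather than only for $\mathbf{v}\in\ker\mathbf{M}$, so it may be applied to $\mathbf{v}:=\mathbf{\hat x}-\mathbf{x}$ despite $\mathbf{v}\notin\ker\mathbf{M}$ in general. First I would bound the measurement of the error: since $\mathbf{\hat x}$ is feasible for \eqref{eqProblemP1Noise} and $\mathbf{y}=\mathbf{Mx}+\mathbf{w}$ with $\norm{\mathbf{w}}_2\leq\eta$, the triangle inequality gives $\norm{\mathbf{Mv}}_2\leq\norm{\mathbf{M\hat x}-\mathbf{y}}_2+\norm{\mathbf{y}-\mathbf{Mx}}_2\leq 2\eta$.

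Second, I would record a robust $\ell_1$ form of the null space property: fixing $\Lambda$ with $\#\Lambda^c=s$ and applying H\"older's inequality (Remark~\ref{rem:l1l2}) to \eqref{eqRobustL2NSP} gives $\norm{\mathbf{\Omega}_{\Lambda^c}\mathbf{v}}_1\leq\rho\norm{\mathbf{\Omega}_{\Lambda}\mathbf{v}}_1+\sqrt{s}\,\tau\norm{\mathbf{Mv}}_2$. With this, I would rerun the proof of Theorem~\ref{thConeCostraint}: the feasibility of $\mathbf{x}$ gives $\norm{\mathbf{\Omega\hat x}}_1\leq\norm{\mathbf{\Omega x}}_1$, and the same triangle-inequality step produces $\norm{\mathbf{\Omega}_{\Lambda}\mathbf{v}}_1\leq\norm{\mathbf{\Omega}_{\Lambda^c}\mathbf{v}}_1+2\norm{\mathbf{\Omega}_{\Lambda}\mathbf{x}}_1$; substituting the robust $\ell_1$ bound, rearranging, and choosing $\Lambda$ to realize $\sigma_s(\mathbf{\Omega x})_1$ yields
\[
\norm{\mathbf{\Omega}_{\Lambda}\mathbf{v}}_1\leq\frac{2\,\sigma_s(\mathbf{\Omega x})_1+\sqrt{s}\,\tau\norm{\mathbf{Mv}}_2}{1-\rho}.
\]
Splitting $\norm{\mathbf{\Omega v}}_1=\norm{\mathbf{\Omega}_{\Lambda}\mathbf{v}}_1+\norm{\mathbf{\Omega}_{\Lambda^c}\mathbf{v}}_1$ and applying the robust $\ell_1$ bound a second time, the factor $\tfrac{1+\rho}{1-\rho}+1=\tfrac{2}{1-\rho}$ collapses to give $\norm{\mathbf{\Omega v}}_1\leq\tfrac{2(1+\rho)}{1-\rho}\sigma_s(\mathbf{\Omega x})_1+\tfrac{2\sqrt{s}\,\tau}{1-\rho}\norm{\mathbf{Mv}}_2$.

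Third, for the passage to $\ell_2$ I would follow Theorem~\ref{thRecoveryWithL2NSP} essentially verbatim: applying \eqref{eqRobustL2NSP} to the head set $S_0$ of the $s$ largest entries of $\mathbf{\Omega v}$ now leaves the extra summand $\tau\norm{\mathbf{Mv}}_2$, while the purely combinatorial tail bound $\norm{(\mathbf{\Omega v})_{S_0^c}}_2\leq s^{-1/2}\norm{\mathbf{\Omega v}}_1$ is unchanged, so that $\norm{\mathbf{\Omega v}}_2\leq\tfrac{1+\rho}{\sqrt{s}}\norm{\mathbf{\Omega v}}_1+\tau\norm{\mathbf{Mv}}_2$. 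Inserting the $\ell_1$ estimate and simplifying via $\tfrac{2(1+\rho)}{1-\rho}+1=\tfrac{3+\rho}{1-\rho}$ gives $\norm{\mathbf{\Omega v}}_2\leq\tfrac{2(1+\rho)^2}{(1-\rho)\sqrt{s}}\sigma_s(\mathbf{\Omega x})_1+\tfrac{(3+\rho)\tau}{1-\rho}\norm{\mathbf{Mv}}_2$. Finally I would substitute $\norm{\mathbf{Mv}}_2\leq 2\eta$ from the first step and use the lower frame bound $\norm{\mathbf{x}-\mathbf{\hat x}}_2\leq A^{-1/2}\norm{\mathbf{\Omega v}}_2$ to reach \eqref{eqRobustL2StableRecovery}. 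I do not expect a genuine obstacle here; the difficulty is purely the bookkeeping of the two constant factors $\tfrac{2}{1-\rho}$ and $\tfrac{3+\rho}{1-\rho}$, and the one conceptual point is that \eqref{eqRobustL2NSP} must be invoked for the arbitrary vector $\mathbf{v}$ rather than for a kernel vector as in the noiseless theorems.
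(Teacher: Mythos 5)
Your proof is correct: every step checks out, and the bookkeeping delivers exactly the constants in \eqref{eqRobustL2StableRecovery}. It is, however, a genuinely different route from the paper's, whose entire proof is two sentences: the authors invoke Theorem 5 of \cite{Foucart} (with $q=p=2$) as a black box for the bound on $\norm{\mathbf{\Omega}(\mathbf{x}-\mathbf{\hat x})}_2$, and then apply the lower frame bound --- only your final step. What you have done is reconstruct, inside the paper's own framework, the argument that this citation encapsulates: the feasibility observations $\norm{\mathbf{M}(\mathbf{\hat x}-\mathbf{x})}_2\leq 2\eta$ and $\norm{\mathbf{\Omega\hat x}}_1\leq\norm{\mathbf{\Omega x}}_1$ (the latter valid because $\mathbf{x}$ itself is feasible for \eqref{eqProblemP1Noise}), the robust $\ell_1$ cone constraint extending Theorem~\ref{thConeCostraint}, and the head/tail splitting of $\mathbf{\Omega v}$ extending Theorem~\ref{thRecoveryWithL2NSP}, with the extra term $\tau\norm{\mathbf{Mv}}_2$ threaded through each estimate. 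Your identification of the one conceptual point is also exactly right: the robust property \eqref{eqRobustL2NSP} is quantified over all of $\RR^d$ precisely so that it can be applied to $\mathbf{\hat x}-\mathbf{x}$, which is not a kernel vector in the noisy setting. What the paper's route buys is brevity; what yours buys is a self-contained proof with the constants $\frac{2}{1-\rho}$ and $\frac{3+\rho}{1-\rho}$ verified explicitly rather than imported --- arguably a better fit for this paper, since its noiseless analogues are proved in full.
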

\begin{proof}
Theorem 5 in \cite{Foucart} with $q=p=2$ provides the bound for $\norm{\mathbf\Omega(\mathbf x-\mathbf{\hat x})}_2$. Taking into account that $\mathbf\Omega$ is a frame with lower frame constant $A$, we obtain estimate (\ref{eqRobustL2StableRecovery}).

\end{proof}


\subsection{Uniform recovery from Gaussian measurements}

We now show Theorem~\ref{thUniformRecoveryWithFrame} by establishing the $\ell_2$-stable $\mathbf{\Omega}$-null space property of order $s$ for a Gaussian measurement
matrix $\mathbf{M}$ by following a similar strategy as in Section~\ref{sec:nonuniform}. To this end we introduce the set 
\[
W_{\rho,s}:=\left\{\mathbf{w}\in\RR^d: \norm{\mathbf{\Omega}_{\Lambda^c}\mathbf{w}}_2>\rho/\sqrt s\norm{\mathbf{\Omega}_{\Lambda}\mathbf{w}}_1\;\mbox{for some}\;\Lambda\subset[p],\; \#\Lambda=p-s\right\}.
\]
In fact, if
\begin{equation}\label{eqMinNormPositivity}
\inf\fbrac{\norm{\mathbf{Mw}}_2:\mathbf{w}\in W_{\rho,s}\cap \mathbb{S}^{d-1}}>0,
\end{equation}
then for all $\mathbf{w}\in\ker\mathbf{M}$ and any $\Lambda\subset [p]$ with $\#\Lambda=p-s$ we have
\[
\norm{\mathbf{\Omega}_{\Lambda^c}\mathbf{w}}_2\leq\frac{\rho}{\sqrt s}\norm{\mathbf{\Omega}_{\Lambda}\mathbf{w}}_1,
\]
which means that $\mathbf M$ satisfies the $\ell_2$-stable $\mathbf{\Omega}$-null space property of order $s$. To show (\ref{eqMinNormPositivity}) we apply Theorem \ref{thModifiedGordonsEscapeThroughTheMesh}, 
which requires to study the Gaussian width of the set $\mathbf{\Omega}\brac{W_{\rho,s}\cap \mathbb{S}^{d-1}}$. Since $\mathbf{\Omega}$ is a frame with upper frame bound $B$, we have
\begin{equation}\label{eqInclusionDueToFrame}
\mathbf{\Omega}\brac{W_{\rho,s}\cap \mathbb{S}^{d-1}}\subset\mathbf{\Omega}\brac{W_{\rho,s}}\cap\brac{\sqrt B \mathbb{B}_2^p}\subset T_{\rho,s}\cap\brac{\sqrt B \mathbb{B}_2^p}=\sqrt B\brac{T_{\rho,s}\cap \mathbb{B}_2^p},
\end{equation}
with
\[
T_{\rho,s}=\left\{\mathbf{u}\in\RR^p: \norm{\mathbf{u}_S}_2\geq\rho/\sqrt s\norm{\mathbf{u}_{S^c}}_1\;\mbox{for some}\;S\subset[p],\; \# S=s\right\}.
\]
Then
\[
T_{\rho,s}\cap \mathbb{B}_2^p=\bigcup\limits_{\#S=s}\left\{ \mathbf{u}\in\RR^p: \norm{\mathbf{u}}_2\leq 1,\;\norm{\mathbf{u}_S}_2>\frac{\rho}{\sqrt s}\norm{\mathbf{u}_{S^c}}_1\right\}.
\]
\begin{lemma}\label{lmInclusionInUniversalSet}
Let $D$ be the set defined by
\begin{equation}\label{eqDefinitionOfD}
D:=\conv\fbrac{\mathbf{x}\in \mathbb S^{p-1}:\#\supp \mathbf{x}\leq s}.
\end{equation}
\begin{enumerate}[(a)]
\item\label{itUnitBall}  Then $D$ is the unit ball with respect to the norm
\[
\norm{\mathbf{x}}_D:=\sum_{l=1}^L\sbrac{\sum_{i\in I_l}\brac{x_i^*}^2}^{1/2},
\]
where $L=\lceil\frac{p}{s}\rceil$, 
\[
I_l=\left\{\begin{array}{ll}
\fbrac{s(l-1)+1,\ldots,sl}, & l=1,\ldots, L-1,\\
\fbrac{s(L-1)+1,\ldots, p}, & l=L,
\end{array}\right.
\]
and $\mathbf{x^*}$ is the non-increasing rearrangement of $\mathbf{x}$.
\item\label{itInclusion} It holds
\begin{equation}\label{eqInclusionInUniversalSet}
T_{\rho,s}\cap \mathbb{B}_2^p\subset \sqrt{1+(1+\rho^{-1})^2}D.
\end{equation}
\end{enumerate}
\end{lemma}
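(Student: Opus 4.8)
The plan is to dispatch (a) by recognizing $D$ as the convex hull of an atomic set and identifying $\norm{\cdot}_D$ with the associated gauge, and then to reduce (b) to (a) by an elementary splitting argument. Write $\mathcal{A}:=\{\mathbf{x}\in\mathbb{S}^{p-1}:\#\supp\mathbf{x}\le s\}$, so that $D=\conv\mathcal{A}$. Since $\mathcal{A}$ is symmetric ($-\mathbf{a}\in\mathcal{A}$ whenever $\mathbf{a}\in\mathcal{A}$) and spans $\RR^p$, the set $D$ is a symmetric convex body with $\mathbf{0}$ in its interior, hence its gauge $\gamma_D(\mathbf{x})=\inf\{t>0:\mathbf{x}\in tD\}$ is a norm whose unit ball is exactly $D$. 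Thus (a) amounts to proving $\gamma_D=\norm{\cdot}_D$. I would organize this through duality: the support function of $D$ is $h_D(\mathbf{y})=\sup_{\mathbf{a}\in\mathcal{A}}\abrac{\mathbf{a},\mathbf{y}}=\brac{\sum_{i=1}^s(y_i^*)^2}^{1/2}$, the $\ell_2$-norm of the $s$ largest entries of $\mathbf{y}$, and $\gamma_D$ is precisely the norm dual to $h_D$.

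The inequality $\gamma_D(\mathbf{x})\le\norm{\mathbf{x}}_D$ is the easy half: writing $\beta_l:=\norm{(\mathbf{x}^*)_{I_l}}_2$ and $\mathbf{u}_l:=(\mathbf{x}^*)_{I_l}/\beta_l\in\mathcal{A}$ (after undoing the rearrangement), the decomposition $\mathbf{x}=\sum_l\beta_l\mathbf{u}_l$ exhibits $\mathbf{x}$ as a conic combination of atoms with total mass $\sum_l\beta_l=\norm{\mathbf{x}}_D$, whence $\gamma_D(\mathbf{x})\le\norm{\mathbf{x}}_D$. The reverse inequality is where the real work lies: one must show $\norm{\mathbf{x}}_D\le\sup\{\abrac{\mathbf{x},\mathbf{y}}:h_D(\mathbf{y})\le1\}$, that is, produce for each $\mathbf{x}$ a dual vector $\mathbf{y}$ with $h_D(\mathbf{y})\le1$ and $\abrac{\mathbf{x},\mathbf{y}}=\norm{\mathbf{x}}_D$; equivalently, verify the H\"older-type pairing $\abrac{\mathbf{x},\mathbf{y}}\le\norm{\mathbf{x}}_D\,h_D(\mathbf{y})$ together with attainability of equality. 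I expect this dual computation --- matching the sorted-block structure of $\norm{\cdot}_D$ against the top-$s$ constraint defining $h_D$ --- to be the main obstacle, and it is in particular what is responsible for the genuine subadditivity of $\norm{\cdot}_D$; the construction of the optimal certificate $\mathbf{y}$ is sensitive to how the mass of $\mathbf{x}$ is distributed across the sorted blocks and should be carried out carefully.

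For (b) I would argue directly from (a). Fix $\mathbf{u}\in T_{\rho,s}\cap\mathbb{B}_2^p$; by definition there is $S\subset[p]$ with $\#S=s$, $\norm{\mathbf{u}}_2\le1$ and $\norm{\mathbf{u}_S}_2>\frac{\rho}{\sqrt s}\norm{\mathbf{u}_{S^c}}_1$. First I record the elementary estimate $\norm{\mathbf{w}}_D\le\norm{\mathbf{w}}_2+\frac{1}{\sqrt s}\norm{\mathbf{w}}_1$, valid for every $\mathbf{w}\in\RR^p$: the first sorted block contributes at most $\norm{\mathbf{w}}_2$, while each later block obeys $\norm{(\mathbf{w}^*)_{I_l}}_2\le\frac{1}{\sqrt s}\norm{(\mathbf{w}^*)_{I_{l-1}}}_1$, since the largest entry of block $l$ does not exceed the average of block $l-1$; summing these yields the tail bound $\frac{1}{\sqrt s}\norm{\mathbf{w}}_1$. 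Splitting $\mathbf{u}=\mathbf{u}_S+\mathbf{u}_{S^c}$ and invoking the triangle inequality for $\norm{\cdot}_D$ from (a), together with $\norm{\mathbf{u}_S}_D=\norm{\mathbf{u}_S}_2$ (as $\mathbf{u}_S$ is $s$-sparse) and the estimate above applied to $\mathbf{w}=\mathbf{u}_{S^c}$, gives
\[
\norm{\mathbf{u}}_D\le\norm{\mathbf{u}_S}_2+\norm{\mathbf{u}_{S^c}}_2+\frac{1}{\sqrt s}\norm{\mathbf{u}_{S^c}}_1.
\]
The defining inequality of $T_{\rho,s}$ yields $\frac{1}{\sqrt s}\norm{\mathbf{u}_{S^c}}_1<\rho^{-1}\norm{\mathbf{u}_S}_2$, so that $\norm{\mathbf{u}}_D\le(1+\rho^{-1})\norm{\mathbf{u}_S}_2+\norm{\mathbf{u}_{S^c}}_2$. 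Since $\norm{\mathbf{u}_S}_2^2+\norm{\mathbf{u}_{S^c}}_2^2=\norm{\mathbf{u}}_2^2\le1$, the Cauchy--Schwarz inequality bounds the right-hand side by $\sqrt{1+(1+\rho^{-1})^2}$. By (a) this is exactly the statement $\mathbf{u}\in\sqrt{1+(1+\rho^{-1})^2}\,D$, which is the claimed inclusion \eqref{eqInclusionInUniversalSet}. The only nontrivial input here is the norm property of $\norm{\cdot}_D$ supplied by (a); everything else is a short chain of elementary inequalities.
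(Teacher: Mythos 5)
Your part (a) is incomplete in a way that cannot be repaired: the ``hard half'' you defer --- producing for each $\mathbf{x}$ a certificate $\mathbf{y}$ with $h_D(\mathbf{y})\le 1$ and $\abrac{\mathbf{x},\mathbf{y}}=\norm{\mathbf{x}}_D$ --- does not exist, because the identity $\gamma_D=\norm{\cdot}_D$ is false. Take $p=3$, $s=2$ and the atoms $\mathbf{v}_1=\tfrac{1}{\sqrt 2}(1,1,0)$, $\mathbf{v}_2=\tfrac{1}{\sqrt 2}(0,1,1)$, $\mathbf{v}_3=\tfrac{1}{\sqrt 2}(1,0,1)$. Their average $\mathbf{w}=\tfrac{\sqrt 2}{3}(1,1,1)$ lies in $D$ by convexity, yet $\norm{\mathbf{w}}_D=\tfrac{2}{3}+\tfrac{\sqrt 2}{3}=\tfrac{2+\sqrt 2}{3}>1$; equivalently $\norm{\mathbf{v}_1+\mathbf{v}_2+\mathbf{v}_3}_D=2+\sqrt 2>3=\norm{\mathbf{v}_1}_D+\norm{\mathbf{v}_2}_D+\norm{\mathbf{v}_3}_D$, so $\norm{\cdot}_D$ is not subadditive and hence not a norm at all. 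Thus $D$ strictly contains $\fbrac{\mathbf{x}:\norm{\mathbf{x}}_D\le 1}$, and only the ``easy half'' you do prove --- $\norm{\mathbf{x}}_D\le 1$ implies $\mathbf{x}\in D$, via the sorted-block decomposition --- is true; this is exactly the second part of the paper's own proof. You should know that the paper's proof of the converse inclusion is flawed in the same way (it invokes the triangle inequality for $\norm{\cdot}_D$, which the example above refutes), so your instinct that this step is ``the main obstacle'' was correct in the strongest sense: it is not merely delicate, it is impossible. Fortunately, the only consequence of (a) used anywhere later (in (b) and in the Gaussian width bounds) is the true inclusion $\fbrac{\norm{\cdot}_D\le 1}\subseteq D$.

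The false half of (a) then contaminates your (b): you invoke ``the triangle inequality for $\norm{\cdot}_D$ from (a)'' to split $\mathbf{u}=\mathbf{u}_S+\mathbf{u}_{S^c}$, and that inequality fails in general. The good news is that every other step of your chain is correct ($\norm{\mathbf{u}_S}_D=\norm{\mathbf{u}_S}_2$, the tail bound $\norm{\mathbf{w}}_D\le\norm{\mathbf{w}}_2+\tfrac{1}{\sqrt s}\norm{\mathbf{w}}_1$, the use of the $T_{\rho,s}$ inequality, and the final Cauchy--Schwarz step), and the gap closes with a one-line change: work with the gauge $\gamma_D$, which \emph{is} a norm since $D$ is a compact, symmetric convex body, and use your easy half of (a) only in the form $\gamma_D\le\norm{\cdot}_D$. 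Then $\gamma_D(\mathbf{u})\le\gamma_D(\mathbf{u}_S)+\gamma_D(\mathbf{u}_{S^c})\le\norm{\mathbf{u}_S}_2+\norm{\mathbf{u}_{S^c}}_2+\tfrac{1}{\sqrt s}\norm{\mathbf{u}_{S^c}}_1<(1+\rho^{-1})\norm{\mathbf{u}_S}_2+\norm{\mathbf{u}_{S^c}}_2\le\sqrt{1+(1+\rho^{-1})^2}$, and $\mathbf{u}\in\sqrt{1+(1+\rho^{-1})^2}\,D$ follows from closedness of $D$. For comparison, the paper never splits $\mathbf{u}$ and never needs any subadditivity: it estimates the sorted blocks of $\mathbf{u}$ directly --- blocks $l\ge 3$ are bounded by $\tfrac{1}{\sqrt s}$ times the $\ell_1$-norm of the tail, which the $T_{\rho,s}$ condition controls by $\rho^{-1}$ times the first block's $\ell_2$-norm $a$, and the second block by $\sqrt{1-a^2}$ --- and then maximizes $(1+\rho^{-1})a+\sqrt{1-a^2}$, which is the same optimization your Cauchy--Schwarz performs. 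After the repair, the two proofs of (b) essentially coincide.
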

A similar result was stated as Lemma 4.5 in \cite{RudelsonVershynin}. For the sake of completeness we present the proof.
\begin{proof}
\ref{itUnitBall} Suppose $\mathbf{x}\in D$. It can be represented as $\mathbf x=\sum\limits_{i}\alpha_i\mathbf x_i$ with $\alpha_i\geq0$, $\sum\limits_i\alpha_i=1$ and $\mathbf x_i\in S^{p-1}$, $\#\supp\mathbf x_i\leq s$. Then $\norm{\mathbf x_i}_D=\norm{\mathbf x_i}_2=1$. By the triangle inequality
\[
\norm{\mathbf x}_D\leq\sum_i\alpha_i\norm{\mathbf x_i}_D=\sum_i\alpha_i=1.
\]
This proves that $D$ is a subset of the unit ball with respect to the $\norm{\cdot}_D$-norm.

On the other hand, let $\norm{\mathbf x}_D\leq 1$. We partition the index set $[p]$ into subsets $S_1$, $S_2$, \ldots of size $s$ in order of decreasing magnitude of entries $x_k$. Set $\alpha_i=\norm{\mathbf x_{S_i}}_2$. Then $\mathbf x$ can be written as
\[
\mathbf x=\sum_{i:\alpha_i\neq 0}\alpha_i\brac{\frac{1}{\alpha_i}\mathbf x_{S_i}},
\qquad
\mbox{ where }  \qquad
\sum_{i:\alpha_i\neq 0}\alpha_i=\sum\limits_{i}\norm{\mathbf x_{S_i}}_2=\norm{\mathbf x}_D\leq 1
\]
and, for $\alpha_i\neq 0$, $\norm{\frac{1}{\alpha_i}\mathbf x_{S_i}}_2=\frac{1}{\alpha_i}\norm{\mathbf x_{S_i}}_2=1$.
Thus $\mathbf x\in D$.

\ref{itInclusion} Take an arbitrary $\mathbf{x}\in T_{\rho,s}\cap \mathbb{B}_2^p$. To show (\ref{eqInclusionInUniversalSet}) we estimate $\norm{\mathbf x}_D$. According to the definition of $\norm{\cdot}_D$ in Lemma \ref{lmInclusionInUniversalSet} \ref{itUnitBall},
\begin{align}
\norm{\mathbf x}_D&=\sum_{l=1}^L\sbrac{\sum_{i\in I_l}\brac{x_i^*}^2}^{\frac{1}{2}}\notag\\
&=\sbrac{\sum_{i=1}^s\brac{x_i^*}^2}^{\frac{1}{2}}+\sbrac{\sum_{i=s+1}^{2s}\brac{x_i^*}^2}^{\frac{1}{2}}+\sum_{l\geq 3}^L\sbrac{\sum_{i\in I_l}\brac{x_i^*}^2}^{\frac{1}{2}}\label{eq:EstimateForDNormInBlocks}.
\end{align}
To bound the last term in the inequality above, we first note that for each $i\in I_{l}$, $l\geq 3$,
\[
x^*_i\leq\frac{1}{s}\sum_{j\in I_{l-1}}x^*_j\quad\text{and}\quad \sbrac{\sum_{i\in I_{l}}(x^*_i)^2}^{1/2}\leq\frac{1}{\sqrt s}\sum_{j\in I_{l-1}}x^*_j.
\]
Summing up over $l\geq 3$ yields
\[
\sum_{l\geq 3}^L\sbrac{\sum_{i\in I_l}\brac{x_i^*}^2}^{\frac{1}{2}}\leq\frac{1}{\sqrt s}\sum_{l\geq 2}\sum_{j\in I_l}x^*_j.
\]
Since $\mathbf{x}\in T_{\rho,s}\cap \mathbb{B}_2^p$, it holds $\norm{\mathbf{x}}_2\leq 1$ and there is $S\subset [p]$, $\# S=s$, such that $\norm{\mathbf{x}_S}_2>\rho/\sqrt s\norm{\mathbf{x}_{S^c}}_1$. Then
\[
\sum_{l\geq 2}\sum_{i\in I_l}x^*_i\leq\norm{\mathbf x_{S^c}}_1<\frac{\sqrt s}{\rho}\norm{\mathbf x_S}_2\leq\frac{\sqrt s}{\rho}\sbrac{\sum_{i=1}^s(x_i^*)^2}^{\frac{1}{2}}
\]
and
\[
\sum_{l\geq 3}^L\sbrac{\sum_{i\in I_l}\brac{x_i^*}^2}^{\frac{1}{2}}\leq\rho^{-1}\sbrac{\sum_{i=1}^s(x_i^*)^2}^{\frac{1}{2}}.
\]
Applying the last estimate to (\ref{eq:EstimateForDNormInBlocks}) and taking into account that $\norm{x}_2^2\leq 1$ we derive that
\[
\begin{aligned}
\norm{\mathbf x}_D&\leq(1+\rho^{-1})\sbrac{\sum_{i=1}^s(x_i^*)^2}^{\frac{1}{2}}+\sbrac{\sum_{i=s+1}^{2s}\brac{x_i^*}^2}^{\frac{1}{2}}\\
&\leq1+\rho^{-1})\sbrac{\sum_{i=1}^s(x_i^*)^2}^{\frac{1}{2}}+\sbrac{1-\sum_{i=1}^{s}\brac{x_i^*}^2}^{\frac{1}{2}}.
\end{aligned}
\]
Set $a=\sbrac{\sum_{i=1}^s(x_i^*)^2}^{\frac{1}{2}}$. The maximum of the function
\[
f(a):=(1+\rho^{-1})a+\sqrt{1-a^2}, \quad 0\leq a\leq 1,
\]
is attained at the point
\[
a = \frac{1+\rho^{-1}}{\sqrt{1+(1+\rho^{-1})^2}}
\]
and is equal to $\sqrt{1+(1+\rho^{-1})^2}$. Thus for any $\mathbf{x}\in W$ it holds
\[
\norm{\mathbf x}_D\leq\sqrt{1+(1+\rho^{-1})^2},
\]
which proves (\ref{eqInclusionInUniversalSet}).
\end{proof}
Lemma \ref{lmInclusionInUniversalSet} \ref{itInclusion} implies
\begin{equation}\label{eqGaussianWidthOfConeAndD}
\ell\brac{T_{\rho,s}\cap \mathbb{B}_2^p}\leq \sqrt{1+(1+\rho^{-1})^2}\ell(D).
\end{equation}

\begin{lemma}\label{lmEstimateGaussianWidthOfD}
The Gaussian width of the set $D$ defined by (\ref{eqDefinitionOfD}) satisfies
\[
\ell(D)\leq\sqrt{2s\ln\frac{ep}{s}}+\sqrt s.
\]
\end{lemma}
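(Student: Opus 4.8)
The plan is to compute the Gaussian width of $D$ exactly and then control the resulting expected maximum by an exponential-moment (Laplace transform) argument combined with concentration of measure.

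First I would reduce $\ell(D)$ to an expected maximum of norms of Gaussian coordinate restrictions. Since $\mathbf{z}\mapsto\abrac{\mathbf{z},\mathbf{g}}$ is linear, the supremum over a convex hull equals the supremum over its generators, so $\ell(D)=\mean\sup\fbrac{\abrac{\mathbf{x},\mathbf{g}}:\mathbf{x}\in\mathbb{S}^{p-1},\ \#\supp\mathbf{x}\leq s}$. For a fixed index set $S$ with $\#S=s$, Cauchy--Schwarz gives $\sup\fbrac{\abrac{\mathbf{x},\mathbf{g}}:\norm{\mathbf{x}}_2=1,\ \supp\mathbf{x}\subseteq S}=\norm{\mathbf{g}_S}_2$, attained at $\mathbf{x}=\mathbf{g}_S/\norm{\mathbf{g}_S}_2$. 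As every support of size at most $s$ is contained in one of size exactly $s$, this yields the clean identity
\[
\ell(D)=\mean\,\underset{\#S=s}\max\,\norm{\mathbf{g}_S}_2.
\]

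Next I would bound the expected maximum $Z:=\underset{\#S=s}\max\norm{\mathbf{g}_S}_2$ by a union bound at the level of exponential moments. For any $\lambda>0$, Jensen's inequality together with a union bound give $e^{\lambda\mean Z}\leq\mean e^{\lambda Z}\leq\sum_{\#S=s}\mean e^{\lambda\norm{\mathbf{g}_S}_2}$. The key observation is that for each fixed $S$ the map $\mathbf{g}\mapsto\norm{\mathbf{g}_S}_2$ is $1$-Lipschitz with respect to the Euclidean norm and has expectation $\mean\norm{\mathbf{g}_S}_2=E_s\leq\sqrt{s}$. Hence, by the exponential-moment (sub-Gaussian) form of the concentration of measure in Lemma~\ref{lmConcentrationOfMeasure}, $\mean e^{\lambda(\norm{\mathbf{g}_S}_2-E_s)}\leq e^{\lambda^2/2}$, and therefore $\mean e^{\lambda\norm{\mathbf{g}_S}_2}\leq e^{\lambda\sqrt{s}+\lambda^2/2}$. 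Inserting the cardinality bound $\binom{p}{s}\leq(ep/s)^s$, taking logarithms and dividing by $\lambda$ produces
\[
\mean Z\leq\frac{1}{\lambda}\brac{s\ln\frac{ep}{s}+\lambda\sqrt{s}+\frac{\lambda^2}{2}}=\frac{s\ln(ep/s)}{\lambda}+\sqrt{s}+\frac{\lambda}{2},
\]
and the choice $\lambda=\sqrt{2s\ln(ep/s)}$ makes the first and third terms equal, giving exactly $\ell(D)\leq\sqrt{2s\ln(ep/s)}+\sqrt{s}$.

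The main obstacle is the passage from the tail form of concentration stated in Lemma~\ref{lmConcentrationOfMeasure} to the sub-Gaussian exponential-moment bound with variance proxy equal to the Lipschitz constant; this is the standard moment-generating-function formulation of Gaussian concentration and is precisely what makes the constants line up. It is worth noting that a more naive route -- union-bounding the upper tails of $\norm{\mathbf{g}_S}_2$ and integrating -- also succeeds but leaves a spurious lower-order additive term, whereas the Laplace transform argument with the optimal $\lambda$ lands exactly on the claimed $\sqrt{2s\ln(ep/s)}+\sqrt{s}$.
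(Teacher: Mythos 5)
Your proof is correct, and its second half takes a genuinely different route from the paper. The first step is identical: both you and the paper reduce $\ell(D)$ to the expected maximum $\mean\max_{\#S=s}\norm{\mathbf{g}_S}_2$ by observing that the supremum of a linear functional over the convex hull is attained at the $s$-sparse generators. From there the paper applies Jensen's inequality to pass to $\sqrt{\mean\max_S\norm{\mathbf{g}_S}_2^2}$ and then cites an external estimate (Lemma 3.2 of Rao--Recht--Nowak, or Proposition 8.2 of Foucart--Rauhut) giving $\sqrt{2\ln\binom{p}{s}}+\sqrt{s}$, whereas you inline a self-contained Laplace-transform argument: Jensen plus a union bound at the level of exponential moments, the sub-Gaussian MGF bound $\mean e^{\lambda(\norm{\mathbf{g}_S}_2-E_s)}\leq e^{\lambda^2/2}$ for the $1$-Lipschitz function $\norm{\mathbf{g}_S}_2$, and the optimal choice $\lambda=\sqrt{2s\ln(ep/s)}$, which lands exactly on the claimed constants. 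What the paper's route buys is brevity; what yours buys is that the proof becomes essentially closed-form, resting only on Gaussian concentration. One caveat you correctly flag but should not understate: the MGF form you use is \emph{not} derivable with the same constant from the tail-form statement of Lemma~\ref{lmConcentrationOfMeasure} as given in the paper (integrating tails degrades the variance proxy); you must invoke the standard Herbst/log-Sobolev or Tsirelson--Ibragimov--Sudakov formulation of Gaussian concentration, which is a well-known theorem but a genuinely stronger citation than the lemma as stated. Since that formulation is classical, this is a matter of attribution rather than a gap.
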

\begin{proof}
The supremum of the linear functional $\abrac{\mathbf{g},\mathbf{x}}$ over $D$ is achieved at an extreme point, i.e., at an $\mathbf{x}\in S^{p-1}$ with $\#\supp \mathbf{x}\leq s$. Hence, by H\"older's inequality
\[
\ell(D)=\mean\underset{\mathbf{x}\in D}\sup\abrac{\mathbf{g},\mathbf{x}}=\mean\underset{\begin{subarray}{c}
\norm{\mathbf{x}}_2=1,\\
 \#\supp \mathbf{x}\leq s
 \end{subarray}}\sup\abrac{\mathbf{g},\mathbf{x}}\leq\mean\underset{S\subset [p], \#S=s}\max\norm{\mathbf{g}_S}_2.
 \]
An estimate on the maximum squared $\ell_2$-norm of a sequence of standard Gaussian random vectors (see e.g.\ \cite[Lemma 3.2]{RaoRechtNowak} or \cite[Proposition 8.2]{FoucartRauhut}) 
gives
\[
\ell(D)\leq\sqrt{\mean\underset{S\subset [p], \#S=s}\max\norm{\mathbf{g}_S}_2^2}
\leq\sqrt{2\ln\binom{p}{s}}+\sqrt s\leq\sqrt{2s\ln\frac{ep}{s}}+\sqrt s.
\]
The last inequality follows from the fact that
$
\binom{p}{s}\leq\brac{\frac{ep}{s}}^s$, see e.g.~\cite[Lemma C.5]{FoucartRauhut}. 
\end{proof}

\begin{proof}[of Theorem \ref{thUniformRecoveryWithFrame}]
Expressions (\ref{eqInclusionDueToFrame}), (\ref{eqGaussianWidthOfConeAndD}) and Lemma \ref{lmEstimateGaussianWidthOfD} show that
\begin{align}
\ell\brac{\mathbf{\Omega}\brac{W_{\rho,s}\cap \mathbb{S}^{d-1}}}&\leq\sqrt{B\sbrac{1+(1+\rho^{-1})^2}}\ell(D)\notag\\
&\leq\sqrt{B\sbrac{1+(1+\rho^{-1})^2}}\brac{\sqrt{2s\ln\frac{ep}{s}}+\sqrt s}\label{eq:EstimateGWImageOfTheSetUniformRecovery}.
\end{align}
Set $t=\sqrt{2\ln(\eps^{-1})}$. The fact that $E_m\geq m/\sqrt{m+1}$ along with condition (\ref{eqNumberOfMeasurementsForFrameUniformRecovery}) yields 
\[
E_m\geq \frac{1}{\sqrt{A}}\ell \brac{\mathbf{\Omega}\brac{W_{\rho,s}\cap \mathbb{S}^{d-1}}}+t.
\]
The monotonicity of probability and Theorem \ref{thModifiedGordonsEscapeThroughTheMesh} imply
\[
\PP\brac{\inf\norm{\mathbf{Mw}}_2> 0:\mathbf{w}\in W_{\rho,s}\cap \mathbb{S}^{d-1}}\geq 1-e^{-\frac{t^2}{2}}=1-\eps,
\]
which guarantees that with probability at least $1-\eps$
\[
\norm{\mathbf{\Omega}_{\Lambda^c}\mathbf{w}}_2<\frac{\rho}{\sqrt s}\norm{\mathbf{\Omega}_{\Lambda}\mathbf{w}}_1
\]
for all $\mathbf{w}\in\ker\mathbf{M}\setminus\{\mathbf 0\}$ and any $\Lambda\subset [p]$ with $\#\Lambda=p-s$, see (\ref{eqMinNormPositivity}). This means that $\mathbf{M}$ satisfies the $\ell_2$-stable $\mathbf{\Omega}$-null space property of order $s$. Finally, we apply Theorem \ref{thRecoveryWithL2NSP}. 
\end{proof}

Finally, we extend to robustness of the recovery with respect to perturbations of the measurements. 
\begin{theorem}\label{thRobustUniformRecoveryWithFrame}
Let $\mathbf{M}\in\RR^{m\times d}$ be a Gaussian random matrix, $0<\rho<1$, $0<\eps<1$ and $\tau> 1$. If
\begin{equation}\label{eqNumberOfMeasurementsForFrameRobustUniformRecovery}
\frac{m^2}{m+1}\geq   \frac{2\brac{1+(1+\rho^{-1})^2}\tau^2B}{(\tau-1)^2A}\, s\, \brac{\!\!\sqrt{\ln\frac{ep}{s}}+\frac{1}{\sqrt 2}+\sqrt{\frac{A\ln(\eps^{-1})}{Bs\brac{1+(1+\rho^{-1})^2}}}}^{\!\!2}\!\!, 
\end{equation}
then with probability at least $1-\eps$ for every vector $\mathbf{x}\in\RR^d$ and perturbed measurements $\mathbf{y} = \mathbf{\Omega x} + \mathbf{w}$ with $\|\mathbf{w}\|_2 \leq \eta$ 
a minimizer $\mathbf{\hat x}$ of (\ref{eqProblemP1Noise}) approximates $\mathbf{x}$ with $\ell_2$-error
\[
\norm{\mathbf{x}-\mathbf{\hat{x}}}_2\leq\frac{2(1+\rho)^2}{\sqrt{A}(1-\rho)}\frac{\sigma_{s}(\mathbf{\Omega x})_1}{\sqrt{s}}+\frac{2\tau\sqrt{2B}(3+\rho)}{\sqrt m\sqrt A(1-\rho)}\eta.
\]
\end{theorem}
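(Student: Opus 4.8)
The plan is to derive the statement from the deterministic stability estimate of Theorem~\ref{thRecoveryWithRobustL2NSP}: it suffices to show that, under the measurement condition \eqref{eqNumberOfMeasurementsForFrameRobustUniformRecovery}, a Gaussian matrix $\mathbf{M}$ satisfies the robust $\ell_2$-stable $\mathbf{\Omega}$-null space property of Definition~\ref{defL2RobustStableNSP} with constant $\rho$ and a suitable robustness parameter (which I call $\tau'$, to distinguish it from the design parameter $\tau>1$ of the hypothesis), with probability at least $1-\eps$. The whole argument then comes down to producing the correct $\tau'$ and matching the two error coefficients.

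First I would bound the restricted minimum of $\mathbf{M}$ over $W_{\rho,s}\cap\mathbb{S}^{d-1}$. Applying Theorem~\ref{thModifiedGordonsEscapeThroughTheMesh} with $T=W_{\rho,s}\cap\mathbb{S}^{d-1}$ and $t=\sqrt{2\ln(\eps^{-1})}$, and inserting the Gaussian width bound \eqref{eq:EstimateGWImageOfTheSetUniformRecovery} already established in the proof of Theorem~\ref{thUniformRecoveryWithFrame}, gives that with probability at least $1-\eps$ one has $\inf_{\mathbf{w}\in T}\norm{\mathbf{Mw}}_2\geq E_m-S$, where, writing $K:=1+(1+\rho^{-1})^2$,
\[
S:=\frac{\sqrt{BK}}{\sqrt A}\brac{\sqrt{2s\ln\tfrac{ep}{s}}+\sqrt s}+t .
\]
The key algebraic observation is that taking square roots in \eqref{eqNumberOfMeasurementsForFrameRobustUniformRecovery} and distributing the factor $\sqrt{2KBs/A}$ over the three summands reproduces exactly $\tfrac{\tau}{\tau-1}S$: the first two summands give back $\tfrac{\sqrt{BK}}{\sqrt A}(\sqrt{2s\ln\frac{ep}{s}}+\sqrt s)$, while the cross term $\sqrt{2KBs/A}\cdot\sqrt{A\ln(\eps^{-1})/(BsK)}$ collapses to $t=\sqrt{2\ln(\eps^{-1})}$. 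Hence the hypothesis reads $\frac{m}{\sqrt{m+1}}\geq\frac{\tau}{\tau-1}S$, i.e. $S\leq\frac{\tau-1}{\tau}\frac{m}{\sqrt{m+1}}$, and combining this with $E_m\geq m/\sqrt{m+1}$ yields
\[
\inf_{\mathbf{w}\in W_{\rho,s}\cap\mathbb{S}^{d-1}}\norm{\mathbf{Mw}}_2\;\geq\;E_m-S\;\geq\;\frac{m}{\sqrt{m+1}}-\frac{\tau-1}{\tau}\frac{m}{\sqrt{m+1}}\;=\;\frac{m}{\tau\sqrt{m+1}}.
\]

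Next I would convert this lower bound into the robust null space property. For an arbitrary $\mathbf{v}\in\RR^d$ there are two cases. If $\mathbf{v}\notin W_{\rho,s}$, then by the very definition of $W_{\rho,s}$ the inequality $\norm{\mathbf{\Omega}_{\Lambda^c}\mathbf{v}}_2\leq\frac{\rho}{\sqrt s}\norm{\mathbf{\Omega}_\Lambda\mathbf{v}}_1$ holds for every $\Lambda$ with $\#\Lambda=p-s$, so \eqref{eqRobustL2NSP} is satisfied outright since $\tau'\norm{\mathbf{Mv}}_2\geq0$. If $\mathbf{v}\in W_{\rho,s}$, I would use that $W_{\rho,s}$ is a cone: normalizing $\mathbf{v}$ and applying the bound just obtained gives $\norm{\mathbf{Mv}}_2\geq\frac{m}{\tau\sqrt{m+1}}\norm{\mathbf{v}}_2$, whence the upper frame bound $B$ yields, for every such $\Lambda$,
\[
\norm{\mathbf{\Omega}_{\Lambda^c}\mathbf{v}}_2\leq\sqrt B\,\norm{\mathbf{v}}_2\leq\frac{\sqrt B\,\tau\sqrt{m+1}}{m}\norm{\mathbf{Mv}}_2 .
\]
Combining the two cases, and passing from $\#\Lambda=p-s$ to $\#\Lambda\geq p-s$ by the same monotonicity argument as in the proof of Theorem~\ref{thUniformRecoveryWithFrame}, shows that $\mathbf{M}$ satisfies the robust $\ell_2$-stable $\mathbf{\Omega}$-null space property with constant $\rho$ and parameter $\tau'=\frac{\sqrt B\,\tau\sqrt{m+1}}{m}$.

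Finally, I would invoke Theorem~\ref{thRecoveryWithRobustL2NSP}. Its first error term $\frac{2(1+\rho)^2}{\sqrt A(1-\rho)}\frac{\sigma_s(\mathbf{\Omega x})_1}{\sqrt s}$ already coincides with the claimed one, while its noise term $\frac{2\tau'(3+\rho)}{\sqrt A(1-\rho)}\eta$ becomes $\frac{2\sqrt B\,\tau\sqrt{m+1}(3+\rho)}{m\sqrt A(1-\rho)}\eta$; the elementary inequality $\sqrt{m+1}\leq\sqrt{2m}$ turns $\frac{\sqrt{m+1}}{m}$ into $\frac{\sqrt2}{\sqrt m}$ and delivers the stated coefficient $\frac{2\tau\sqrt{2B}(3+\rho)}{\sqrt m\sqrt A(1-\rho)}$. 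I expect the main obstacle to be precisely the constant bookkeeping of the second paragraph: one must verify that the opaque factor $\frac{\tau^2}{(\tau-1)^2}$ in \eqref{eqNumberOfMeasurementsForFrameRobustUniformRecovery} is exactly what is needed for the Gordon threshold to leave a residual of size $\frac{m}{\tau\sqrt{m+1}}$, while keeping the three distinct roles played by ``$\tau$'' — the robustness parameter $\tau'$ of the null space property, the free parameter $\tau>1$ of the hypothesis, and the residual of the width estimate — cleanly separated throughout.
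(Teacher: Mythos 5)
Your proposal is correct and follows essentially the same route as the paper's proof: Gordon's escape theorem (Theorem~\ref{thModifiedGordonsEscapeThroughTheMesh}) with the width bound \eqref{eq:EstimateGWImageOfTheSetUniformRecovery}, the algebraic identification of the hypothesis with a $\frac{\tau-1}{\tau}$-fraction of the Gordon threshold, the two-case split (vectors outside $W_{\rho,s}$ satisfy the null space inequality by definition, vectors inside it are handled via the lower bound on $\norm{\mathbf{Mv}}_2$ and the upper frame bound $\sqrt{B}$) yielding the robust $\ell_2$-stable $\mathbf{\Omega}$-null space property, and finally Theorem~\ref{thRecoveryWithRobustL2NSP}. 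The only cosmetic difference is that you carry the sharper threshold $\frac{m}{\tau\sqrt{m+1}}$ to the end and then use $\sqrt{m+1}\leq\sqrt{2m}$, whereas the paper passes to $\frac{1}{\tau}\sqrt{m/2}$ right after the Gordon step; the resulting constants coincide.
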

\begin{proof}
Condition (\ref{eqNumberOfMeasurementsForFrameRobustUniformRecovery}) together with $E_m\geq m/\sqrt{m+1}$ imply
\[
E_m\brac{1-\frac{1}{\tau}}\geq\sqrt{\frac{B}{A}(1+(1+\rho^{-1})^2)}\brac{\sqrt{2s\ln\frac{ep}{s}}+\sqrt s}+\sqrt{2\ln(\eps^{-1})},
\]
which is equivalent to
\[
E_m- \sqrt{\frac{B}{A}(1+(1+\rho^{-1})^2)}\brac{\sqrt{2s\ln\frac{ep}{s}}+\sqrt s}-\sqrt{2\ln(\eps^{-1})}\geq\frac{E_m}{\tau}.
\]
Taking into account (\ref{eq:EstimateGWImageOfTheSetUniformRecovery}) we may conclude
\[
E_m-\frac{1}{\sqrt A}\ell\brac{\mathbf\Omega\brac{W_{\rho,s}\cap\mathbb S^{d-1}}}-\sqrt{2\ln(\eps^{-1})}\geq\frac{E_m}{\tau}\geq\frac{1}{\tau}\sqrt\frac{m}{2}.
\]
Then according to Theorem \ref{thModifiedGordonsEscapeThroughTheMesh} 
\[
\PP\brac{\inf\norm{\mathbf{Mw}}_2> \frac{\sqrt m}{\tau\sqrt 2}:\mathbf{w}\in W_{\rho,s}\cap \mathbb{S}^{d-1}} \geq 1-\eps.
\]
This means that for any $\mathbf w\in\RR^d$ such that $\norm{\mathbf{Mw}}_2\leq\frac{\sqrt m}{\tau\sqrt 2}\norm{\mathbf w}_2$ and any set $\Lambda\subset [p]$ with $\# \Lambda\geq p-s$ it holds with probability at least $1-\eps$
\[
\norm{\mathbf{\Omega}_{\Lambda^c}\mathbf{w}}_2<\frac{\rho}{\sqrt{s}}\norm{\mathbf{\Omega}_{\Lambda}\mathbf{w}}_1.
\]
For the remaining vectors $\mathbf w\in\RR^d$, we have $\norm{\mathbf{Mw}}_2>\frac{\sqrt m}{\tau\sqrt 2}\norm{\mathbf w}_2$, which together with the fact that $\mathbf\Omega$ is a frame with upper frame bound $B$ leads to
\[
\norm{\mathbf{\Omega}_{\Lambda^c}\mathbf{w}}_2\leq\norm{\mathbf{\Omega w}}_2\leq\sqrt B\norm{\mathbf w}_2<\frac{\tau\sqrt {2B}}{\sqrt m}\norm{\mathbf{Mw}}_2.
\]
Thus, for any $\mathbf w\in\RR^d$,
\[
\norm{\mathbf{\Omega}_{\Lambda^c}\mathbf{w}}_2<\frac{\rho}{\sqrt{s}}\norm{\mathbf{\Omega}_{\Lambda}\mathbf{w}}_1+\frac{\tau\sqrt{2B}}{\sqrt m}\norm{\mathbf{Mw}}_2.
\]
Finally, we apply Theorem \ref{thRecoveryWithRobustL2NSP}. 
\end{proof}


\section{Numerical experiments}

In this section we present the results of numerical experiments on synthetic data performed in Matlab using the \texttt{cvx} package. 

For the first set of experiments we constructed tight frames $\mathbf{\Omega}$ as an orthonormal basis of the range of the matrix the rows of which were drawn randomly and independently from $\mathbb{S}^{d-1}$. 
In order to obtain also non-tight frames we simply varied the norms of the rows of $\mathbf{\Omega}$. As dimensions for the analysis operator, we have chosen $d=200$ and $p=250$. The maximal number of zeros that can be achieved in the analysis representation 
$\mathbf{\Omega x}$ is less than $d$, since otherwise $\mathbf{x}=0$. Therefore, the sparsity level of $\mathbf{\Omega x}$ was always greater than $50$. For each trial we fixed a cosparsity $l$ (resulting
in the sparsity $s=p-l$)
and selected at random $l$ rows of an analysis operator $\mathbf{\Omega}$ that constitute the cosupport $\Lambda$ of the signal. To produce a signal $\mathbf{x}$ we constructed 
a basis $\mathbf{B}$ of $\ker \mathbf{\Omega}_{\Lambda}$, drew a coefficient vector $\mathbf{c}$ from a normalized standard Gaussian distribution and set $\mathbf{x}=\mathbf{Bc}$. 
We ran the algorithm and counted the number of times the signal was recovered correctly out of $70$ trials. A reconstruction error of less than $10^{-5}$ was considered as a successful recovery. 
The curves in Figure~\ref{figDifFramesMeasurementsVsSparsity} depict the relation between the number of measurements and the sparsity level such that the recovery was successful at least $98\%$ of the time.  
Each point on the line corresponds to the maximal sparsity level that could be achieved for the given number of measurements.    
\begin{figure}[hbt]
\centering
\includegraphics[scale=0.5 ]{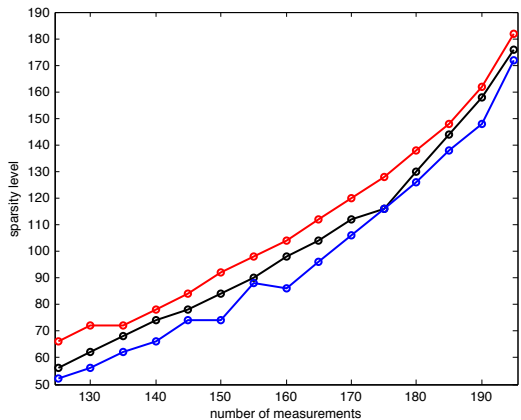}
\caption{Recovery for different analysis operators. The red curve corresponds to a tight frame, the black one has frame bound ratio $B/A$ of $13.1254$ and for the blue one $B/A=45.7716$.}\label{figDifFramesMeasurementsVsSparsity}
\end{figure}

%

The experiments clearly show that analysis $\ell_1$-minimization works very well for recovering cosparse signals from Gaussian measurements.
(Note that only a comparison of the experiments with the nonuniform recovery guarantees make sense.) The frame bound ratio $B/A$ indeed influences the performance of the recovery algorithm (\ref{eqProblemP1}) --
although the degradation with increasing value of $B/A$ is less dramatic than indicated by our theorems. The reason for this may be that the theorems give estimates for the worst case, while
the experiments can only reflect the typical behavior.


\section*{Acknowledgements}
M.~Kabanava and H.~Rauhut acknowledge support by the Hausdorff Center for Mathematics, University of Bonn, and by the European Research Council through
the grant StG 258926.





\end{document}